\begin{document}

\twocolumn[

\aistatstitle{Learning Linear Regression with Low-Rank Tasks In-Context}

\aistatsauthor{ Kaito Takanami \And Takashi Takahashi \And  Yoshiyuki Kabashima }

\aistatsaddress{The University of Tokyo \And The University of Tokyo\\ RIKEN AIP \And  The University of Tokyo } ]

\begin{abstract}
  In-context learning (ICL) is a key building block of modern large language models, yet its theoretical mechanisms remain poorly understood.
  It is particularly mysterious how ICL operates in real-world applications where tasks have a common structure.
  In this work, we address this problem by analyzing a linear attention model trained on low-rank regression tasks. Within this setting, we precisely characterize the distribution of predictions and the generalization error in the high-dimensional limit.
  Moreover, we find that statistical fluctuations in finite pre-training data induce an implicit regularization. 
  Finally, we identify a sharp phase transition of the generalization error governed by task structure.
  These results provide a framework for understanding how transformers learn to learn the task structure.
\end{abstract}

\section{Introduction}
\label{sec:introduction}

A defining characteristic of modern large-scale language models (LLMs) is their capacity for in-context learning (ICL). 
This allows them to perform new tasks by conditioning on a few examples provided in context, without requiring any parameter updates. 
This capability has led to impressive performance on a wide array of tasks, including few-shot classification~\cite{brown2020language}, complex reasoning~\cite{ahn2024large} and code generation~\cite{wang2023review, zhong2024can}. 
The surprising effectiveness of ICL has naturally attracted significant theoretical interest in uncovering underlying principles.

However, relatively few studies of ICL go beyond settings with independent tasks to consider structured families of tasks~\cite{oko2024pretrained}.
In practice, tasks are often not independent but share underlying structures, as widely recognized in multi-task learning (MTL)~\cite{caruana1997multitask}.  
In this setting, assuming a shared low-rank structure among tasks~\cite{han2016multi} has led to effective algorithms across domains such as computer vision~\cite{Zhang2022-mx}, natural language processing~\cite{Zhang2005-sp, Ando2005-gj}, and bioinformatics~\cite{Pong2010-zv}.  
This suggests that it is equally important to examine how ICL behaves when tasks share such common structure.

Another limitation of current ICL theory is its lack of precise analysis and interpretability.
On the one hand, existing studies can predict overall performance and reproduce key phenomena such as double descent~\cite{lu2025asymptotic} and emergence~\cite{wei2022emergent, raventos2023pretraining} in toy models, but these studies offer limited interpretability.
On the other hand, prior works have shown that transformers \textit{can} implement useful algorithms for regression~\cite{garg2022can, akyurek2022learning, von2023transformers}, yet the precise mathematical form of the predictors they \textit{actually} acquire through training has remained elusive.

In this work, we aim to gain insight into these issues by analyzing an attention-only transformer trained on linear regression tasks~\cite{garg2022can} with a low-rank structure.
Although the model is simple, it exhibits rich and nontrivial phenomena. It has therefore been extensively studied and shown to be instrumental in explaining various aspects of ICL~\cite{von2023transformers, akyurek2022learning, ahn2023transformers, Wu2023-ul, Zhang2023-nh, cui2024superiority, lu2025asymptotic,zhang2025training, Samet2025-qb} both theoretically and empirically. 
To interpret the mechanisms of ICL effectively, we situate our analysis in the high-dimensional asymptotic regime, where the system's complex behavior can be precisely characterized by a small number of macroscopic order parameters~\cite{zavatone2025summary}.

At the core of our analysis is a precise solution of a linear attention model trained on regression tasks in the high-dimensional limit.
This framework reveals a decomposition of the ICL prediction into two parts: 
an algorithmic component that performs linear regression, 
and a noise component whose effect depends on the information acquired during pre-training~(Section~\ref{sec:decomposing_the_icl_prediction}).
When the tasks share a low-rank structure, additional phenomena emerge. 
First, the model learns an efficient algorithm that exploits the low-rank structure of the tasks.  
Moreover, when the prompt contains memorized tasks or the learned task structure, the noise component can be suppressed~(Section~\ref{sec:decomposing_the_icl_prediction}).  
Second, the learning of low-rank tasks is stabilized by an implicit regularization induced by finite-data statistics during pre-training~(Section~\ref{sec:implicit_regularization}). 
Finally, we show that the rank of the pre-training tasks induces a sharp phase transition in the model’s capabilities, from a regime where the model cannot fully exploit the low-rank structure to one where the low-rank structure is fully exploited.
In the latter regime, the model faces a fundamental trade-off between specialization to pre-trained (in-distribution) tasks and robustness to unseen out-of-distribution tasks~(Section~\ref{sec:emergence_and_trade_offs_from_varying_task_difficulty})\footnote{The code is available in \url{https://github.com/taka255/icl-replica-analysis}.}.

\section{Related Work}

\paragraph{Theory of ICL.}
A growing body of theoretical studies has investigated ICL through simplified regression settings. 
Early analyses~\cite{garg2022can,akyurek2022learning,von2023transformers, ahn2023transformers} showed that Transformers can implement regression procedures in-context and characterized the learned predictors in tractable toy models. 
Subsequent works~\cite{lu2025asymptotic, letey2026pretraintest} employing asymptotic analyses of linear attention have established a high-dimensional theory that predicts macroscopic behavior. 
This perspective makes it possible to analyze the typical outcome of pretraining in tractable Transformer models, going beyond representability to characterize its generalization. 
Building on these foundations, our work focuses on structured low-rank task families and provides a mechanism-level characterization of how pretraining statistics shape the learned in-context predictor through a decomposition into an algorithmic signal and noise terms.

\paragraph{Distribution Shift Across Tasks in ICL.}
Recent work has begun to characterize ICL under distribution shift. 
\cite{kwon2026outofdistribution} show that when task vectors lie in a union of low-dimensional subspaces, ICL can generalize to any subspace in their span. 
\cite{letey2026pretraintest} study pretrain-test mismatch and show that performance degrades as the two task distributions separate, while perfect matching is not always optimal. Complementing these results, our theory links out-of-distribution degradation to an explicit structure-dependent error term that grows under task-structure mismatch.

\section{Model}
\label{sec:model}

We model the mechanism of ICL using a transformer-based architecture. 
The overall goal is to train a model that can infer and execute the algorithm for linear regression purely from a sequence of examples provided in its context. 
To this end, we formalize a problem setup consisting of a pre-training phase, where the model learns the general algorithm, 
and an inference phase, where its ability to generalize to new instances is evaluated\footnote{A comprehensive list of notations is provided in Appendix~\ref{appendix:notations} for reference.}.

\subsection{Pretraining Phase} \label{sec:pretraining_phase}

We begin by constructing a base collection of tasks $\mathcal{W}_0 = \{\vb{w}^\mu \in \mathbb{R}^D \mid \mu=1, \ldots, M_0\}$, which represents a diverse pool of possible task vectors. Each $\vb{w}^\mu$ is generated from a low-dimensional latent structure: we fix a shared feature matrix $A \in \mathbb{R}^{D \times r}$ with orthonormal columns ($r \le D$), draw a latent task feature vector $\vb{v}^\mu \sim \mathcal{N}(\mathbf{0}, I_r)$, and map it to the $D$-dimensional space via $\vb{w}^\mu = \sqrt{D/r} A \vb{v}^\mu$.

From this base collection, we then form the actual training set $\mathcal{W}$ by sampling $M (\gg M_0)$ tasks uniformly with replacement. This step reflects the fact that, in the thermodynamic limit, obtaining nontrivial learning behavior requires that individual tasks may recur many times. In other words, $\mathcal{W}_0$ captures the diversity of available tasks, while $\mathcal{W}$ specifies the effective workload that the learner repeatedly encounters.

For each task $\vb{w}^\mu \in \mathcal{W}$, we create a training instance
by first drawing $L+1$ input vectors, $\{\vb{x}_i^\mu\}_{i=1}^{L+1}$,
independently from a normal distribution $\mathcal{N}(\mathbf{0}, \mathbf{I}_D/D)$.
The corresponding outputs are then generated as $y_i^\mu = \vb{w}^\mu \cdot \vb{x}_i^\mu + \epsilon_i^\mu$,
where each $\epsilon_i^\mu$ is an independent noise term drawn from $\mathcal{N}(0, \sigma^2)$.
The first $L$ pairs, $\{(\vb{x}_l^\mu, y_l^\mu)\}_{l=1}^L$, serve as \textbf{demonstrations},
while the final pair, $(\vb{x}_{L+1}^\mu, y_{L+1}^\mu)$, constitutes the \textbf{query} and its \textbf{label}.
The entire generative process for these instances is denoted by $\mathcal{D}_{\text{pretrain}}$.
These elements are concatenated to form the input sequence for the model, which we refer to as the \textbf{context} $C^\mu \in \R^{(D+1) \times (L+1)}$:
\begin{equation}
  C^\mu = \ab( \mqty{\vb{x}_1^\mu & \vb{x}_2^\mu & \cdots & \vb{x}_L^\mu & \vb{x}_{L+1}^\mu \\
  y_1^\mu & y_2^\mu & \cdots & y_L^\mu & 0} ) ,
\end{equation}
where the label corresponding to the query $\vb{x}_{L+1}^\mu$ is replaced with a placeholder value of zero, indicating that it is the unknown quantity the model is required to predict.

Our model is an attention-based architecture that maps an input context to a prediction. This is expressed as the composition of an attention function $\mathsf{Atten}_{\Theta}: \R^{(D+1) \times (L+1)} \to \R^{(D+1) \times (L+1)}$ with parameters $\Theta$, and a readout function $\mathsf{Read}: \R^{(D+1) \times (L+1)} \to \R$. The prediction for the query is given by:
\begin{equation}
  \hat{y}_{L+1}^\mu = \mathsf{Read}(\mathsf{Atten}_{\Theta}(C^\mu)) .
\end{equation}
The model is trained by minimizing the mean squared error (MSE) loss between the predictions and the true labels over all $M$ training instances. The loss function is:
\begin{equation}
  \mathcal{L}(\Theta) = \frac{1}{M} \sum_{\mu=1}^M ( \hat{y}_{L+1}^\mu - y_{L+1}^\mu )^2 . \label{eq:loss_function}
\end{equation}
Through optimization, we obtain the learned parameters $\Theta^* = \arg\min_{\Theta} \mathcal{L}(\Theta)$.

\subsection{Inference Phase}
\label{sec:inference_phase}

During the inference phase, we evaluate the generalization ability of the pre-trained model
with fixed optimal parameters $\Theta^*$.
Performance is assessed on new test instances generated from a fixed evaluation task $\vb{w}^* \in \mathbb{R}^D$.
Each test instance is created by first drawing $\tilde{L}+1$ input vectors, $\{\vb{x}_i\}_{i=1}^{\tilde{L}+1}$,
independently from $\mathcal{N}(\mathbf{0}, \mathbf{I}_D/D)$,
where $\tilde{L} \le L$ reflects a few-shot learning scenario.
The corresponding outputs are generated without noise as $y_i = \vb{w}^* \cdot \vb{x}_i$
to assess the model's pure algorithmic reasoning.
The first $\tilde{L}$ pairs serve as the \textbf{demonstrations},
while the final pair serves as the \textbf{query} $\vb{x}^{\tilde{L}+1}
=\vb{x}$ and its \textbf{label} $y^{\tilde{L}+1}=y$.
We refer to this test data distribution for a given task $\vb{w}^*$ as $\mathcal{D}_{\text{test}}$.
Since the length of demonstrations $\tilde{L}$ may be less than $L$, the input sequence, referred to as the \textbf{prompt} $P$, is padded with zero vectors to match the context length of $L+1$ expected by the model. 
The prompt $P \in \R^{(D+1) \times (L+1)}$ is constructed as:
\begin{equation}
P = \ab( \mqty{\vb{x}_1 & \cdots & \vb{x}_{\tilde{L}} & \vb{0} & \cdots & \vb{0} & \vb{x} \\
y_1 & \cdots & y_{\tilde{L}}  & 0 & \cdots & 0 & 0} ) ,
\end{equation}
where the final column is the query input $\vb{x}$ with its label hidden. 
The prediction $\hat{y}$ is compared with the true label $y$.
The model's prediction for the query is obtained by applying the frozen network; $\hat{y} = \mathsf{Read}(\mathsf{Atten}_{\Theta^*}(P))$.


\subsection{Evaluation Protocol}
\label{sec:evaluation_protocol}

We evaluate the model's ICL capabilities across three distinct protocols.
\begin{itemize}[leftmargin=*, nosep]
    \item \textbf{Task Memorization (TM):} The task is sampled from the pre-training set, $\vb{w}^* \sim \mathsf{Unif}(\mathcal{W}_0)$. This protocol tests the model's ability to recall and execute a task seen during training, given a new set of demonstrations.
    \item \textbf{In-Distribution Generalization (IDG):} The task is novel but generated from the same underlying structure, $\vb{w}^* = \sqrt{D/r} A \vb{v}^*$, for a new latent vector $\vb{v}^* \sim \mathcal{N}(\mathbf{0}, I_r)$. This protocol measures the ability to generalize to unseen tasks that share the same low-dimensional structure as the training tasks.
    \item \textbf{Out-of-Distribution Generalization (ODG):} The task is drawn from a standard isotropic Gaussian distribution, $\vb{w}^* \sim \mathcal{N}(\vb{0}, I_D)$. These tasks, by design, lack the low-rank structure inherent in the training distribution, thereby testing the model's robustness to a fundamental structural mismatch.
\end{itemize}
The model's performance on each protocol is quantified by its specific MSE, which we denote generically as $\mathcal{E}_{\text{protocol}}$ for $\text{protocol} \in \{\text{TM, IDG, ODG}\}$. 
This error is formally defined as:
\begin{equation}
  \mathcal{E}_{\text{protocol}} = \E_{\vb{w}^* \sim \mathcal{D}_{\text{protocol}}} \E_{\mathcal{D}_{\text{test}}, \mathcal{D}_{\text{pretrain}}}   \ab[ (y - \hat{y})^2],
\end{equation}
where $\mathcal{D}_{\text{protocol}}$ represents the task distribution corresponding to each protocol as defined above.

\subsection{Attention Architecture}

For analytical tractability, we now specify the general architecture to a single-layer linear attention model. 
In this simplified setting, the attention mechanism is defined as:
\begin{equation}
\mathsf{Atten}_{\Theta} (C) = C + \frac{1}{L} VC (KC)^\top (QC), \label{eq:attention_model}
\end{equation}
where the learnable parameters are $\Theta = \{V, K, Q\}$, which are matrices in $\mathbb{R}^{(D+1) \times (D+1)}$. The readout function simply extracts the final scalar output at the query position: $\mathsf{Read}(C') = C'_{D+1, L+1}$.

Under this setting and certain simplifying assumptions, it is known that the pre-training process is equivalent to training a one-layer neural network (see  \cite{Zhang2023-nh, zhang2025training} and Appendix~\ref{app:eq-lenear-nn}). 
Specifically, the input context $C^\mu$ can be mapped to an effective feature matrix $H^\mu \in \R^{D \times D}$ for each training instance $\mu$:
\begin{equation}
  H^\mu = \frac{D}{L} \ab[\vb{x}^{\mu}_{L+1} \sum_{l \le L} y_{l}^\mu (\vb{x}^{\mu}_l)^\top] . \label{eq:H_def}
\end{equation}
The attention model's prediction is then equivalent to the output of a neural network, which is a linear function of this effective feature:
\begin{equation}
\hat{y}_{L+1}^\mu = \mathrm{Tr}\ab(W  {H^\mu}^\top) = \sum_{i,j=1}^D W_{ij} H_{ij}^\mu, \label{eq:equivalent_NN}
\end{equation}
where the trainable weights of the attention mechanism $\{V, K, Q\}$ are implicitly mapped to a single effective weight matrix $W \in \R^{D \times D}$.

This equivalence extends directly to the inference phase. For a given prompt $P$ with $\tilde{L}$ demonstrations, the effective feature matrix $H$ and the corresponding prediction $\hat{y}$ are given by:
\begin{equation}
  H = \frac{D}{\tilde{L}} \ab[\vb{x} \sum_{l \le \tilde{L}} y_{l} (\vb{x}_l)^\top] \quad \hat{y} = \mathrm{Tr}\ab( W^* H^\top ), \label{eq:H_def_inference}
\end{equation}
where $W^*$ is the learned weight matrix from the pre-training phase.

\subsection{High-Dimensional Analysis}

To analyze the model's typical performance, we consider the high-dimensional limit where the problem dimensions ($D, L, \tilde{L}, M_0, M, r$) grow to infinity. The system's behavior in this regime is characterized by several dimensionless parameters, which are assumed to be fixed constants in $(0, \infty)$. We define the \textbf{Pre-training Sample Ratio} as $\alpha = L/D$ and the \textbf{Inference Sample Ratio} as $\tilde{\alpha} = \tilde{L}/D$, which represent the ratio of examples to features during pre-training and at inference, respectively, with the constraint that $\tilde{\alpha} \leq \alpha$. Furthermore, we introduce the \textbf{Task Difficulty} $\rho = r/D$, which is the relative dimension of the latent task space where $\rho \in (0, 1]$. Finally, we define the \textbf{Task Diversity} as $\kappa = M_0 / D$ to capture the richness of the base task set, and the \textbf{Training Data Density} as $\gamma = M / (D M_0)$ to represent the number of training instances per task.

To characterize the model's behavior, we define two regimes based on data density and task diversity. 
For data density, we distinguish the data-rich regime ($\kappa \gamma > 1$) and the data-deficient regime ($\kappa \gamma < 1$). 
Independently, for task diversity, we distinguish the \textit{task-rich regime} ($\rho < \kappa$), where the variety of tasks spans the latent space, and the task-deficient regime ($\rho > \kappa$), where it does not.

\section{Decomposition of ICL Prediction}
\label{sec:decomposing_the_icl_prediction}
Our high-dimensional analysis reveals how structured tasks are learned in context.
The model's prediction decomposes into a core algorithmic signal and two noise terms that can be suppressed by context.
This view reinterprets ICL as a context-dependent noise-reduction mechanism, providing a framework for how general-purpose models adapt their computation.
Formally, the prediction is given by the following result:
\begin{result}[Decomposition of ICL prediction]
\label{result:decomposition}
  The prediction $\hat{y}$ for a given prompt is:
  \begin{equation}\label{eq:decomp}
      \hat{y} = \operatorname{tr}(W^* H^\top) \overset{\mathrm{d}}{=} \hat{y}_{\mathrm{algo}} + \hat{y}_{\mathrm{mem}} + \hat{y}_{\mathrm{struct}},
  \end{equation}
  where $\overset{\mathrm{d}}{=}$ represents equality in distribution.
  This decomposition yields a deterministic \textbf{algorithmic signal} ($\hat{y}_{\mathrm{algo}}$) and two independent noise terms: a \textbf{memorization noise} ($\hat{y}_{\mathrm{mem}}$) correlated with the pre-training tasks, and a \textbf{structure noise} ($\hat{y}_{\mathrm{struct}}$) uncorrelated with them. This decomposition holds for any pre-training task distribution $\mathcal{W}_0$\footnote{We use a replica method~\cite{mezard1987spin, charbonneau2023spin} to analyze the asymptotic limit. This method involves a non-rigorous mathematical step. Consequently, we conservatively state our findings as ``Results'' rather than ``Theorems''. However, the validity of our theoretical predictions is confirmed by numerical experiments, which show excellent agreement (see Appendix~\ref{app:agreement}).}.
\end{result}
The next result characterizes each component analytically for $\alpha \gg 1$.
\begin{result}[Asymptotic Characterization of the Components]
\label{result:asump_decomp}
  Assume data-rich and task-rich regimes. Each term in Eq.~\eqref{eq:decomp} is given by:
  {\small
  \begin{align}
    \hat{y}_{\mathrm{algo}} &\overset{\mathrm{d}}{=} \ab(P_A \vb{w}_{\mathrm{prompt}})^\top \vb{x} + \mathcal{O}\ab(\frac{1}{\alpha}) \label{eq:algo_term}\\
    \hat{y}_{\mathrm{mem}} &\overset{\mathrm{d}}{=} \frac{\sigma\sqrt{\rho}}{\gamma-1/\kappa} Z_{\mathrm{mem}} + \mathcal{O}\ab(\frac{1}{\alpha}) \label{eq:mem_term}\\
    \hat{y}_{\mathrm{struct}} &\overset{\mathrm{d}}{=} \frac{\sigma\rho}{\sqrt{\ab(1+\sigma^2)\ab(\gamma-1/\kappa)}} Z_{\mathrm{struct}} \sqrt{\alpha} + \mathcal{O}\ab(\frac{1}{\sqrt{\alpha}}) \label{eq:gen_term},
  \end{align}
  }
  where random variables $Z_{\mathrm{mem}}$ and $Z_{\mathrm{struct}}$ are given by:
  {\small
  \begin{align}
    Z_{\mathrm{mem}} &\sim \mathcal{N}\ab(0, \ab(1/M_0) \ab(A^\top \vb{w}_{\mathrm{prompt}})^\top S_v^{-1} \ab(A^\top \vb{w}_{\mathrm{prompt}})) \label{eq:Z_mem_def}\\
    Z_{\mathrm{struct}} &\sim \mathcal{N}\ab(0, \ab(1/M_0) \abs{P_A^{\perp} \vb{w}_{\mathrm{prompt}}}^2) \label{eq:Z_gen_def}
  \end{align}
  }
  respectively. Here, $S_v = (1/M_0) \sum_{\mu} \vb{v}^\mu {\vb{v}^\mu}^\top, P_A = AA^\top, P_A^{\perp} = I_D - P_A$ and $\vb{w}_{\mathrm{prompt}} = 1/\tilde{\alpha} \sum_l y_l \vb{x}_l$.
\end{result}
For the formal statement and derivation of Results.~\ref{result:decomposition} and \ref{result:asump_decomp}, see Appendix~\ref{appendix:replica}.

\begin{figure*}[ht] 
  \includegraphics[width=1.0\textwidth]{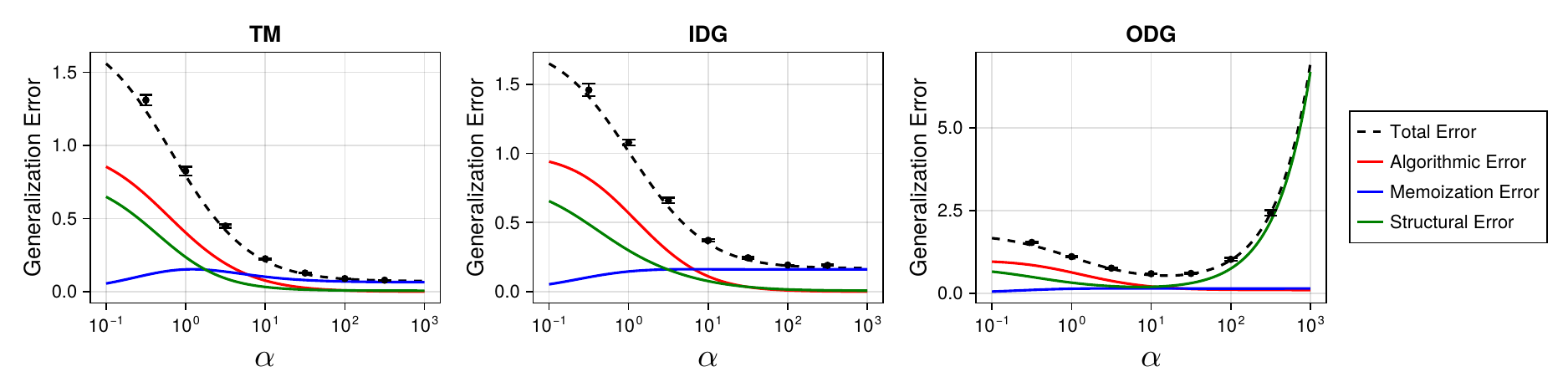}
  \caption{\textbf{Decomposition of generalization error reveals ICL's noise-reduction mechanism.} 
  The total generalization error (in theory with dashed lines, and in numerical experiments with error bars) 
  and its decomposition into algorithmic (signal), memorization (noise), and structural (noise) components (solid lines) for each protocol.
  Parameters: $\gamma=1.5, \kappa=1.5, \rho=0.9, \sigma=0.3, \tilde{\alpha}=\alpha, D = 60$.
  The error bars represent the standard errors of the mean over $5$ independent numerical experiments per point.}
  \label{fig:decomposed_error}
\end{figure*}

To interpret Result~\ref{result:decomposition} at the level of generalization, we convert the prediction decomposition into an error decomposition.  
Because the two noise terms are independent and centered, their cross terms vanish after averaging, and we obtain $\mathcal{E}_{\text{protocol}} = \E[(y-\hat{y})^2] = \E[(y-\hat{y}_{\mathrm{algo}})^2] + \E[\hat{y}_{\mathrm{mem}}^2] + \E[\hat{y}_{\mathrm{struct}}^2].$
Figure~\ref{fig:decomposed_error} plots these three contributions, showing how the decomposition of the predictor translates directly into the generalization error.
These decompositions reveal that the prediction is governed by one algorithmic signal and two distinct noise sources, which we explain in detail below.

\paragraph{The Algorithmic Signal ($\hat{y}_{\text{algo}}$):}
The leading term, $\hat{y}_{\text{algo}}$, represents the model's core algorithmic competency. 
Eq.~\eqref{eq:algo_term} reveals that ICL can acquire a sophisticated two-step procedure learned implicitly from pre-training without any explicit regularization.
First, the model computes a naive (Matched Filter) estimator $\vb{w}_{\text{prompt}} = 1/\tilde{\alpha} \sum_l y_l \vb{x}_l$ from the prompt examples. 
Second, it projects this estimate onto the low-rank subspace defined by $P_A = AA^\top$, which represents the structural prior learned from the pre-training task distribution. 
This term is the primary signal of the prediction. 
As shown in Figure~\ref{fig:decomposed_error}, the error contribution from this algorithmic component consistently decreases as $\alpha$ increases, indicating that a long context improves the model's algorithmic accuracy.

\paragraph{The Suppressible Noise Terms ($\hat{y}_{\text{mem}}, \hat{y}_{\text{struct}}$):}
Our analysis shows that the model's versatility across tasks inherently introduces noise, and that ICL operates by selectively suppressing these noise terms through context. This dual effect explains how ICL supports general-purpose adaptability, and the following two terms illustrate how it manifests in practice.

The term $\hat{y}_{\text{mem}}$ represents a memory-dependent noise. 
Eq.~\eqref{eq:mem_term} shows that its magnitude is determined by the squared Mahalanobis distance between the projected task estimate $A^\top \vb{w}_{\text{prompt}}$ and the memorized task matrix $S_v$ in the latent space.
This distance becomes small when the current task estimate is statistically typical of the memorized tasks, i.e., it aligns with directions in which the tasks $\{\vb{v}^\mu\}$ have high variance.
As seen in Figure~\ref{fig:decomposed_error}, this memorization noise is lower for TM tasks compared to unfamiliar IDG tasks, as the model can successfully recall the familiar task. 
However, this term reveals a trade-off with respect to $\alpha$. 
While a longer context reduces algorithmic and structural errors, it paradoxically worsens this noise, as shown in Figure~\ref{fig:decomposed_error} in the IDG protocol. 
This is a consequence of task-overfitting: with large $\alpha$, the model's over-specialization to the specific patterns of pre-training instances leads to poorer generalization on new tasks, thereby increasing the memorization error.

The term $\hat{y}_{\text{struct}}$ represents a structure-dependent noise. 
The estimator $P_A^{\perp} \vb{w}_{\text{prompt}}$ in Eq.~\eqref{eq:gen_term} explicitly measures the component of the in-context estimate $\vb{w}_{\text{prompt}}$ that is orthogonal to the learned structure $A$. 
This term can be interpreted as an error signal generated by the portion of the prompt that violates the learned structural prior. 
Figure~\ref{fig:decomposed_error} clearly shows this: for ODG tasks, the fundamental structural mismatch causes this noise component to become pronounced and dominate the error. 
Conversely, for in-distribution tasks (TM and IDG), a longer context allows the model to better identify the underlying structure, leading to a decrease in this structural noise.

The decomposition connects to practical ICL settings. The memorization and structural noise terms explain empirical patterns~\cite{mueller-etal-2024-context, wang2025can, letey2026pretraintest}, including strong performance on familiar or aligned tasks and degradation on unfamiliar or mismatched ones. Furthermore, the growth of memorization noise with longer context in in-distribution settings explains the task-overfitting-driven performance drop, a phenomenon that has received limited attention in prior work. In this way, the decomposition matches existing empirical findings and generates new testable predictions.

\section{Finite Pre-training Data Induces Implicit Regularization}
\label{sec:implicit_regularization}

Section~\ref{sec:decomposing_the_icl_prediction} showed that the model learns a low-rank regression algorithm; here, we investigate how this learning remains stable.
This is particularly puzzling in low-rank settings ($\rho < 1$) where, despite the absence of explicit regularization in Eq.~\eqref{eq:loss_function}, our model achieves robust generalization.
To understand this mechanism, we first examine the \textbf{idealized limit} of infinite pre-training context length ($\alpha \to \infty$ before learning).
Contrary to the intuition that perfect data should improve performance, we demonstrate that this limit, lacking finite-sample fluctuations, renders the learning problem ill-posed.
This leads to an initialization-sensitive solution and poor generalization, a stark contrast to the stable learning in the \textbf{asymptotic limit} ($\alpha\to\infty$ after learning).
Our theoretical explanation attributes this stability to an implicit regularization induced by finite-data statistics.

\subsection{Instability in the Idealized Limit}

We conduct a numerical experiment to compare the model's IDG error under two conditions: a standard setting with a finite number of demonstrations $\alpha$, and an idealized limit. 
The latter is implemented by training on the expected feature matrix $H_{\text{ideal}}^\mu = \lim_{\alpha\to\infty} [H^\mu] = \vb{x}_{L+1}^\mu {\vb{w}^\mu}^\top$. 
In both cases, the model is trained via gradient descent from a random initialization $\vb{w}_0$ satisfying $\| \vb{w}_0 \| = D$. 
The resulting IDG error is plotted as a function of $\rho$ in Figure~\ref{fig:implicit_reg_experiment_GD}. The plot shows that for any $\rho < 1$, the error in the idealized limit fails to decrease, 
whereas for finite $\alpha$, the error systematically improves as $\alpha$ increases. 
This result is counter-intuitive; the idealized model, which is given perfect information about the ground-truth task $\vb{w}^\mu$ through $H_{\text{ideal}}^\mu$, was expected to perform better, 
yet it is outperformed by the model that must infer the task from finite demonstrations.

\begin{figure}[htb] 
  \includegraphics[width=0.5\textwidth]{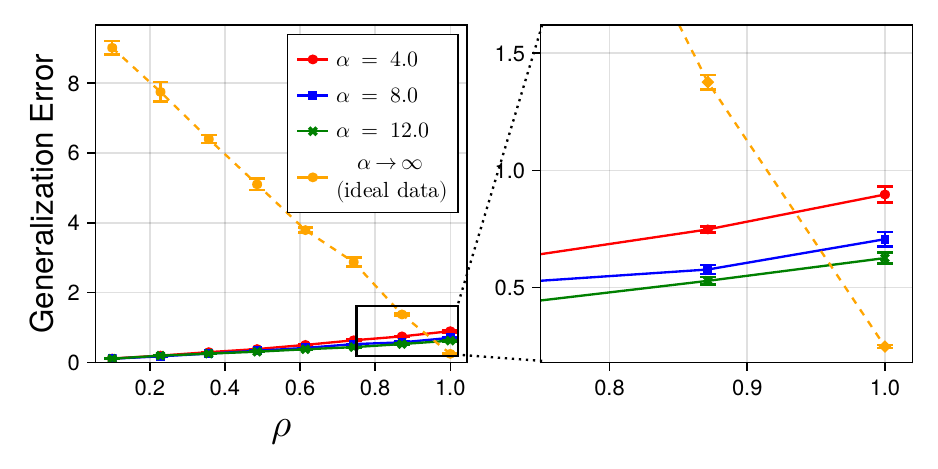}
  \caption{\textbf{Implicit regularization from finite data prevents learning instability.} 
  IDG error as a function of the task subspace dimensionality $\rho$. 
  The figure compares the performance of the idealized limit (dashed line), with standard setting using finite sample ratio $\alpha$ (solid lines). 
  Parameters: $D = 40, M_0 = 60, \tilde{L} = 160, M = 2400, \sigma = 0.01$. 
  The error bars represent the standard errors of the mean over $5$ trials per point.}
  \label{fig:implicit_reg_experiment_GD}
\end{figure}

\subsection{Theoretical Explanation}

To explain this phenomenon, our theoretical analysis reveals that the learned weight matrix $W^*$ behaves as if it were the solution to an effective optimization problem, which differs critically depending on whether the idealized limit or the asymptotic limit is considered.

\begin{result}[Effective Objective Function] \label{result:implicit_regularization}
 There exist scalars $\hat{q}$ and $\hat{\bar{q}}$, determined solely by the system parameters, such that the statistics of the learned weight matrix $W^*$ are equivalent to those of the solution to the following optimization problem, i.e.,
   $W^* \overset{\mathrm{d}}{=} \operatorname*{argmin} f(W)$,
 where the effective objective function $f(W)$ is given by:
 \begin{equation}
   f(W) = \tr\ab[\frac{\hat{q}}{2} WSW^\top - \mathcal{M}_S W^\top + \frac{\hat{\bar{q}}}{2}WW^\top]. \label{eq:implicit_regularization_f}
 \end{equation}
 Here, the matrix $S = (1/M_0)\sum_{\mu=1}^{M_0} \vb{w}^\mu{\vb{w}^\mu}^\top$ is the task covariance matrix, and $\mathcal{M}_S$ is a random noise matrix whose structure depends on $S$. 
\end{result}

Within Eq.~\eqref{eq:implicit_regularization_f}, the two scalars $\hat{q}$ and $\hat{\bar{q}}$ arise from marginalizing over the statistical fluctuations in the multi-body problem Eq.~\eqref{eq:loss_function}. 
The parameter $\hat{q}$ represents the strength of the quadratic interaction between the model parameters and the task structure encoded by $S$. 
The parameter $\hat{\bar{q}}$ acts as an effective regularization strength. 
This regularization is essential for learning low-rank tasks ($\rho < 1$), where the task covariance matrix $S$ is rank-deficient.
Without it, the objective function possesses flat directions corresponding to the null space of $S$.
We therefore next characterize the magnitude of this effective regularization.

\begin{result}[Asymptotics of Regularization]
\label{result:asymptotic_coefficients}
 When $\kappa\gamma>1$, the scalars in the effective objective function (Result~\ref{result:implicit_regularization}) scale as $\hat{q} = \mathcal{O}(1)$ and $\|\mathcal{M}_S\| = \mathcal{O}(1)$.
 The effective regularization coefficient $\hat{\bar{q}}$ is given by:
 \begin{equation}
  \hat{\bar{q}} =
  \begin{cases}
   \mathcal{O}(1/\alpha) > 0 & (\text{in the asymptotic limit}) \\
   0 & (\text{in the idealized limit}).
  \end{cases}
 \end{equation}
\end{result}
For the formal statement and derivation, see Appendix~\ref{appendix:replica}.

These results provide a clear explanation for the observed phenomenon.
Result~\ref{result:asymptotic_coefficients} reveals that the regularization emerges only when the pre-training dataset is finite ($\hat{\bar{q}} > 0$).
By contrast, in the idealized limit, \(\hat{\bar{q}} = 0\), so the minimization problem becomes ill-posed and may fail to have a unique minimizer, leading to the learning instability seen in Figure~\ref{fig:implicit_reg_experiment_GD}.

Furthermore, this framework explains why performance improves as $\alpha$ increases.
The effective regularization strength, given by the ratio $\hat{\bar{q}}/\hat{q} \sim \mathcal{O}(1/\alpha)$, is naturally annealed as more data becomes available.
This allows the model to rely more on the data and converge to a more accurate solution while avoiding instability.

\subsection{Implications}

This finding has several significant implications for theory and practice.

\paragraph{ICL as a Two-step Automatic Learning Process.}
We hypothesize that the origin of this implicit regularization lies in the shared structure and the two-stage process inherent to ICL. 
We can think of the first stage as task identification, where the model identifies the task from the pre-training dataset. 
The statistical fluctuations in this data ensure the resulting internal task representation is never perfectly low-rank and always contains a small, full-rank noise component. 
This unavoidable noise then becomes a beneficial regularizer during the second stage, which we can call inference execution. 
It breaks the flat directions of the optimization landscape that would otherwise cause learning instability in the idealized limit.
In this way, the statistical imperfection in identifying the task is precisely what provides the stability for its execution in low-rank tasks.

\paragraph{Data as a Regularizer.}
In deep learning, implicit regularization is often attributed to optimization algorithms like SGD~\cite{zhang2017understanding}. 
While Transformers are also known to exhibit implicit regularization~\cite{Vasudeva2024-og, Frei2024-fe}, 
existing theories often treat it as an extension of standard supervised learning. 
In contrast, our findings reveal a distinct form of regularization originating from the statistical fluctuations of a finite dataset. We attribute this phenomenon to the unique ICL objective of automatically acquiring an algorithm.
This implies that for certain structures, learning is ironically hindered by idealized, noise-free data, suggesting that the natural statistical noise in a dataset can be a crucial feature, not a bug to be removed.

\section{Trade-offs from Task Difficulty}
\label{sec:emergence_and_trade_offs_from_varying_task_difficulty}
Building on our finding that the model learns a stable low-rank regression algorithm, we now explore how the model's capability is governed by the properties of this low-rank structure. 
We develop a theoretical framework based on eigenvalue spectral analysis, which predicts a transition in the model’s capabilities. This prediction is later validated by our results, revealing a sharp phase transition and fundamental trade-off between specialization and out-of-distribution robustness.

\subsection{Spectral Analysis of Task Structure}

To understand the model's behavior, we draw an analogy to classical linear regression, where the Gram matrix is $G=X^\top X$. 
In the eigenbasis of $G$, the variance of the estimated coefficient along the direction associated with eigenvalue $\lambda_i$ is $\mathsf{Var}(\hat{w}_i)\propto \sigma^2 / \lambda_i$, where $\sigma$ is the noise variance.
A large eigenvalue of $G$ corresponds to a feature direction with a high signal-to-noise ratio, leading to statistical stability.
A small eigenvalue indicates an ill-conditioned direction that is highly sensitive to noise. 
Thus, the number and magnitude of nonzero eigenvalues jointly determine the model’s learnable patterns and their stability.
A zero eigenvalue means that no learnable pattern exists in the corresponding direction, so any apparent estimate comes only from regularization or noise.

Applying this analogy to our ICL framework, we can explain the model's capabilities by analyzing the eigenvalue distribution of the task covariance matrix $S = (1/M_0) \sum_{\mu=1}^{M_0} \vb{w}^\mu{\vb{w}^\mu}^\top$ from Eq.~\eqref{eq:implicit_regularization_f}, which encapsulates the task information learned during pre-training. 
This follows from the fact that, in the eigenbasis of $S$ with eigenvalues $s_j$, the variance of the corresponding estimator scales as $\mathsf{Var}(W_{ij}) \propto 1/(\hat{q} s_j + \hat{\bar{q}})^2$ from Eq.~\eqref{eq:implicit_regularization_f}.
The following theorem precisely characterizes this distribution of eigenvalues and predicts that its structure changes at the transition point $\rho=\kappa$.

\begin{theorem}[Spectrum of the Task Matrix]\label{theorem:eigenvalue_distribution_of_S}
  Recall that $\rho=r/D$ and $\kappa=M_0/D$. The empirical spectral distribution of $S$ converges almost surely to
  \begin{align}
    \nu_S(s) &= \left(1 - \min(\rho,\kappa)\right) \delta(s) \notag \\
    & + \mathbf{1}_{[s_-,s_+]}(s) \, \frac{\rho\kappa}{2\pi s} \sqrt{(s_+ - s)(s - s_-)}.
  \end{align}
  Here, $\mathbf{1}_{[s_-,s_+]}(s)$ is the indicator function. The support edges are given by $s_\pm =  \left(\sqrt{\rho} \pm \sqrt{\kappa}\right)^2 / (\rho\kappa)$.
\end{theorem}
The proof is given in Appendix~\ref{app:eigen_proof}.

This theorem provides a quantitative basis for predicting the model's performance by revealing how the learning bottleneck shifts.
In the task-deficient regime ($\rho > \kappa$), the number of learnable patterns is limited by $\kappa$ (the fraction of nonzero eigenvalues).
Moreover, the typical eigenvalue magnitude scales as $\mathcal{O}(1/\kappa)$, implying that the statistical stability of these patterns is also controlled by $\kappa$.
As a result, the model's capabilities do not change even as tasks get simpler, indicating the model's learning is bottlenecked by task diversity.
In contrast, in the task-rich regime ($\rho < \kappa$), the number of nonzero eigenvalues is $\rho$, and their magnitudes scale as $\mathcal{O}(1/\rho)$.
As tasks become simpler, the larger eigenvalues concentrate useful information, allowing the model to fully exploit the simplified low-rank structure for more stable and efficient learning.

\subsection{Phase Transition in ICL Capabilities}
To connect our spectral analysis with practical ICL performance, we derive the theoretical generalization errors for three settings: TM, IDG, and ODG.

\begin{result}
\label{result:gen_error_simple}
    The generalization errors $\mathcal{E}_{\mathrm{TM}}$, $\mathcal{E}_{\mathrm{IDG}}$, and $\mathcal{E}_{\mathrm{ODG}}$ can be analytically obtained by solving a finite system of self-consistent equations.
    In particular, when the asymptotic case $\alpha\to \infty$, $\kappa\gamma>1$, and $\sigma=0$, each metric is given by:
    \begin{align}
        \mathcal{E}_{\mathrm{TM}} &= \frac{\min(\rho, \kappa)}{\tilde{\alpha}} \left( 1 + \frac{1-\min(\rho, \kappa)}{\kappa\gamma-1} \right) \\
        \mathcal{E}_{\mathrm{IDG}} &= \begin{cases}
            \mathcal{E}_{\mathrm{TM}} & \text{for } \rho < \kappa \\
            1 - \frac{\kappa}{\rho} - \frac{\kappa(\kappa-\rho)}{\rho(\kappa \gamma-1)} + \mathcal{E}_{\mathrm{TM}} & \text{for } \rho \ge \kappa
            \end{cases} \\
        \mathcal{E}_{\mathrm{ODG}} &= 1 - 2\min(\rho, \kappa) + (1 + \tilde{\alpha}) \mathcal{E}_{\mathrm{TM}}.
    \end{align}
\end{result}

For the formal statement and derivation, see Appendix~\ref{appendix:replica}.

As shown in Result~\ref{result:gen_error_simple}, these errors exhibit a sharp phase transition at the predicted critical point $\rho = \kappa$ when $\alpha \to \infty$.
This extends prior work that identified phase transitions with respect to task diversity when no shared task structure is assumed~\cite{lu2025asymptotic}.
Figure~\ref{fig:plateau} illustrates this behavior by showing how the generalization errors change as $\rho$ varies with fixed $\kappa$: for finite $\alpha$ we observe a crossover, which sharpens into a distinct phase transition as $\alpha \to \infty$, dividing the generalization behavior into two regimes.
In the task-deficient regime ($\rho > \kappa$), both TM and ODG errors remain on a plateau, consistent with our spectral analysis.
In this regime, the model’s performance is unaffected by task difficulty, since learning is bottlenecked by task diversity.  
In the task-rich regime ($\rho < \kappa$), as task difficulty decreases, the ODG error increases, while the IDG and TM errors decrease in tandem and remain identical.  
This outcome demonstrates that the model can perfectly capture the in-distribution task structure when $\alpha \to\infty$, as reflected in the alignment of IDG and TM errors.  
It also shows that strong specialization to the low-rank task structure enhances memorization and in-distribution performance, but at the expense of out-of-distribution robustness.  

Next, we analyze the role of $\tilde{\alpha}$.  
Result~\ref{result:gen_error_simple} shows that in the task-rich regime the $\mathcal{E}_{\mathrm{TM}}$ and $\mathcal{E}_{\mathrm{IDG}}$ scale as $\mathcal{O}(1/\tilde{\alpha})$.  
This means the model has learned the in-distribution task structure, but finite $\tilde{\alpha}$ limits the retrieval of this structure from the prompt.
In the limit $\tilde{\alpha} \to \infty$, the errors vanish as $\mathcal{E}_{\mathrm{TM}} = \mathcal{E}_{\mathrm{IDG}} = 0$, showing that the model acquires the ability to achieve complete in-distribution generalization once given enough prompt examples, with the transition occurring at $\rho = \kappa$\footnote{The sudden acquisition of this IDG ability may represent a novel type of \textit{emergence}. Unlike standard emergent phenomena in LLMs~\cite{wei2022emergent, raventos2023pretraining}, which are typically driven by increasing dataset size or model scale, this transition is induced by a change in the intrinsic structure of the task distribution itself, as predicted by our theory.}.

\begin{figure}[h]
  \centering
  \includegraphics[width=0.45\textwidth]{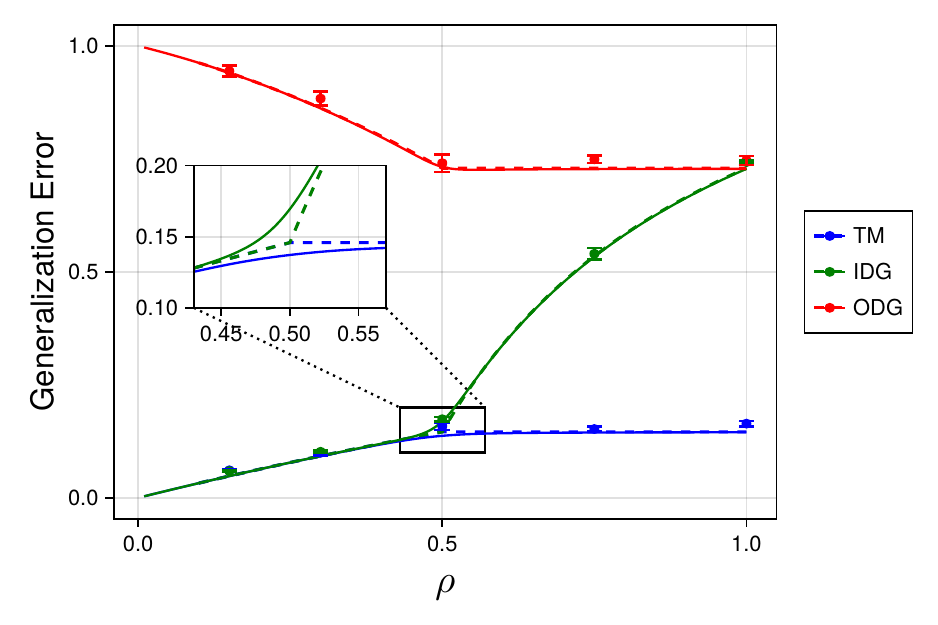}
  \caption{\textbf{Phase transition in the model's capabilities.} 
  Generalization errors ($\mathcal{E}_{\text{TM}}, \mathcal{E}_{\text{IDG}}, \mathcal{E}_{\text{ODG}}$) as a function of the task difficulty $\rho$. 
  The results confirm the predicted phase transition in generalization errors at $\rho = \kappa$.
  The error bars represent the standard errors of the mean over $5$ trials per point. 
  Parameters: $\alpha=200 \text{ (solid lines)}, \alpha\to\infty \text{ (dashed lines)}, \tilde{\alpha}=4.0, \gamma=8.0, \kappa = 0.5, \sigma=0, D = 40$.}\label{fig:plateau}
\end{figure}

\begin{figure*}[ht] 
  \includegraphics[width=1.0\textwidth]{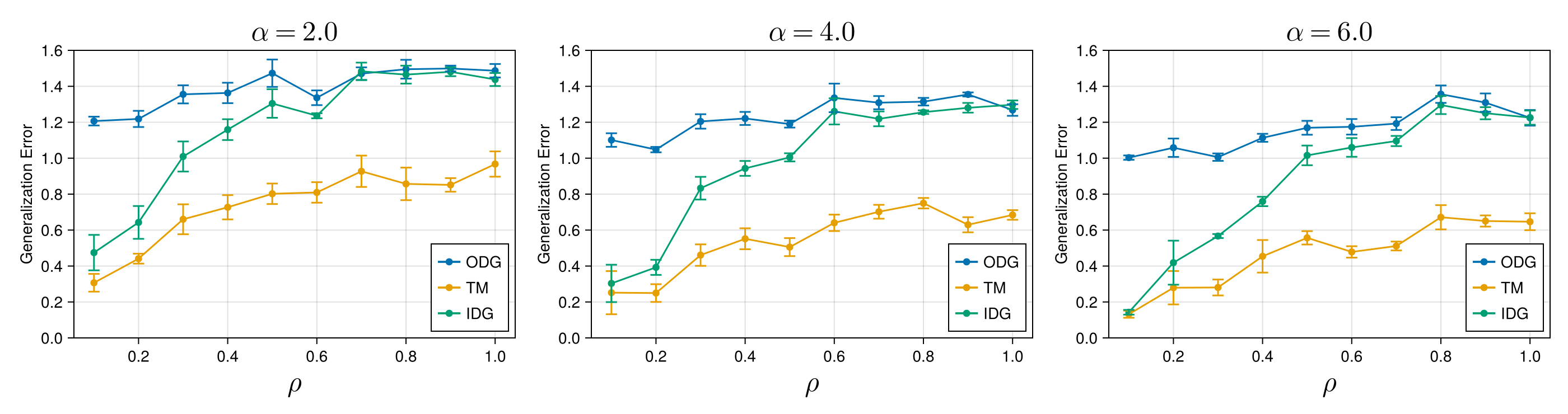}
  \caption{\textbf{Nonlinear softmax attention reproduces the qualitative $\rho$-dependence predicted by the linear theory.} 
  Generalization errors (TM/IDG/ODG) of a one-layer softmax self-attention model as a function of task difficulty $\rho$. 
  The error bars represent the standard errors of the mean over $5$ trials per point. 
  Parameters: $D=20$, $\gamma=3.0$, $\kappa=0.5$, $\sigma=0.01$.}
  \label{fig:experiments}
\end{figure*}


\subsection{Implications for Pre-training Strategy}

These theoretically grounded findings reveal that the balance between task difficulty and diversity is a critical lever for tailoring a model's capabilities. 
For instance, developing a specialized model for a specific domain mandates operating in the task-rich regime. This approach, which involves curating a dataset with high diversity of relatively simple tasks, enables the model to efficiently master domain-specific patterns and achieve high TM and IDG performance.

For many practical applications, however, the most effective strategy is to target the ``sweet spot'' near the phase transition ($\rho \approx \kappa$). This regime offers a favorable trade-off while maintaining robust ODG performance, and it avoids the need to prepare unnecessarily difficult training tasks.

Moreover, our findings are consistent with the intuition of curriculum learning~\cite{bengio2009curriculum}, which introduces simple tasks first and gradually shifts to more complex ones. In our framework, the early stage of training corresponds to low task diversity ($\rho > \kappa$), where the model cannot yet learn complex tasks and thus only benefits from simpler ones. As training progresses and task diversity increases, the model enters a regime where it can exploit the richer structure to master more complex and abstract tasks. This explains why a curriculum that progresses from simple to difficult tasks is effective.

\section{Experiments on Non-linear Attention}
\label{sec:experiments}

Our theoretical analysis has so far focused on linear attention, where the predictive behavior and phase transitions can be characterized precisely through macroscopic order parameters. To examine whether the same qualitative trends persist beyond this tractable setting, we also study a one-layer nonlinear model.

Specifically, we replace the linear mechanism in Eq.~\eqref{eq:attention_model} with a one-layer softmax self-attention model, while keeping the pre-training and evaluation protocols identical to those in Sections~\ref{sec:pretraining_phase}, \ref{sec:inference_phase}, and \ref{sec:evaluation_protocol}. 
We train the model on regression tasks with fixed input dimension $D$ and task diversity $\kappa$, and vary the task difficulty $\rho$ to evaluate the resulting generalization performance. The detailed architecture and optimization setup are described in Appendix~\ref{app:experiments}.

Figure~\ref{fig:experiments} shows the TM, IDG, and ODG errors as functions of $\rho$ for the one-layer softmax attention model. 
As $\rho$ decreases, both TM and IDG errors improve systematically, indicating that nonlinear attention can also exploit shared task structures. Moreover, the improvement in TM begins to plateau as $\rho$ approaches $1$; this is consistent with our prediction in Section~\ref{sec:emergence_and_trade_offs_from_varying_task_difficulty}, which suggests that performance is eventually bottlenecked by task diversity ($\kappa$) rather than by the structure of individual tasks. We also note that we do not observe a sharp phase transition, and the ODG error remains higher than the random-guessing baseline in the present setting. These effects are likely to be most visible in the large-$\alpha$ regime, which we do not explore in the present nonlinear experiments.

\section{Conclusion}
In summary, our study sheds light on the mechanisms of ICL in linear transformers. 
We show that ICL functions as a context-dependent noise-reduction mechanism, enabling the model to perform a precise low-rank regression algorithm when the task structure is aligned with the context. 
We further demonstrate that statistical fluctuations in finite training data give rise to an implicit regularization, which stabilizes the learning process. 
Finally, we uncover a sharp phase transition governed by task structure, highlighting a fundamental trade-off between specialization and robustness under distributional shifts. 
These results together provide a coherent framework for understanding how transformers learn to learn.

\section*{Acknowledgements}
KT was supported by JST BOOST NAIS (Grant No.\ JPMJBS2418). 
TT was supported by JSPS KAKENHI (Grant No.\ 23K16960) and JST ACT-X (Grant No.\ JPMJAX24CG). 
TT and YK were supported by JSPS KAKENHI (Grant No.\ 22H05117).

\bibliographystyle{apalike}
\bibliography{ref}

\clearpage
\appendix
\thispagestyle{empty}
\onecolumn
\aistatstitle{Learning Linear Regression with Low-Rank Tasks in-Context:\\ Supplementary Materials}




\section{Summary of Notations}\label{appendix:notations}

In this section, we summarize the notation used in the main text for the reader's convenience.

\begin{longtable}{@{} l p{0.7\textwidth} @{}}
  \caption{Summary of Notation} \label{tab:notation}\\
  \toprule
  \textbf{Symbol} & \textbf{Description} \\
  \midrule
  \endfirsthead
\multicolumn{2}{c}%
{{\tablename\ \thetable{} -- continued from previous page}} \\
\toprule
\textbf{Symbol} & \textbf{Description} \\
\midrule
\endhead

\bottomrule
\endfoot

\bottomrule
\endlastfoot

\multicolumn{2}{@{}l}{\textbf{Dimensions and Counts}} \\
\addlinespace
$D$ & Dimension of the input and task vector space. \\
$r$ & Dimension of the latent space for tasks ($r \le D$). \\
$M$ & Total number of instances in the pre-training set. \\
$M_0$ & Number of base tasks in the initial pool $\mathcal{W}_0$. \\
$L$ & Number of demonstrations (example pairs) in a pre-training context. \\
$\tilde{L}$ & Number of demonstrations in an inference prompt ($\tilde{L} \le L$). \\
\addlinespace

\multicolumn{2}{@{}l}{\textbf{Sets, Distributions, and Protocols}} \\
\addlinespace
$\mathcal{W}_0$ & Base set of $M_0$ task vectors, $\abs{\mathcal{W}_0} = M_0$. \\
$\mathcal{W}$ & Full pre-training set of $M$ tasks, sampled from $\mathcal{W}_0$, $\abs{\mathcal{W}} = M$. \\
$\mathcal{D}_{\mathrm{pretrain}}$ & Generative distribution for pre-training instances. \\
$\mathcal{D}_{\mathrm{test}}$ & Generative distribution for inference instances. \\
$\mathcal{D}_{\mathrm{protocol}}$ & Task distribution for an evaluation protocol (TM, IDG, or ODG). \\
$\mathrm{TM}$ & Task Memorization evaluation protocol. \\
$\mathrm{IDG}$ & In-Distribution Generalization evaluation protocol. \\
$\mathrm{ODG}$ & Out-of-Distribution Generalization evaluation protocol. \\
$\mathcal{E}_{\mathrm{protocol}}$ & Generalization error for an evaluation protocol. \\
\addlinespace

\multicolumn{2}{@{}l}{\textbf{Task and Data}} \\
\addlinespace
$\vb{w}^\mu \in \mathbb{R}^D$ & Ground-truth weight vector defining task $\mu$. \\
$A \in \mathbb{R}^{D \times r}$ & Shared, low-dimensional feature matrix with orthonormal columns. \\
$\vb{v}^\mu \in \mathbb{R}^r$ & Latent feature vector for base task $\mu$. \\
$\vb{x}_l^\mu, \vb{x} \in \mathbb{R}^D$ & Input vectors for demonstrations or queries. \\
$y_l^\mu, y \in \mathbb{R}$ & Output labels for demonstrations or queries. \\
\addlinespace

\multicolumn{2}{@{}l}{\textbf{Model Components in the attention model}} \\
\addlinespace
$\Theta = \{V, K, Q \in \mathbb{R}^{D \times D}\}$ & Set of learnable parameters in the linear attention model. \\
$\mathsf{Atten}_{\Theta}$ & The attention function: $\R^{(D+1) \times (L+1)} \to \R^{(D+1) \times (L+1)}$. \\
$\mathsf{Read}$ & The readout function: $\R^{(D+1)\times (L+1)} \to \R$. \\
$C^\mu, P \in \mathbb{R}^{(D+1) \times (L+1)}$ & Feature matrix derived from the context/prompt, the input of the attention model. \\
\addlinespace

\multicolumn{2}{@{}l}{\textbf{Model Components in the equivalent linear neural network}} \\
\addlinespace
$\Theta = \{W \in \mathbb{R}^{D \times D}\}$ & Set of learnable parameters in the linear neural network. \\
$H^\mu, H \in \mathbb{R}^{D \times D}$ & Effective feature matrix derived from the context/prompt, the input of the linear neural network. \\
\addlinespace

\multicolumn{2}{@{}l}{\textbf{High-Dimensional Analysis Parameters}} \\
\addlinespace
$\alpha = L/D$ & Pre-training sample ratio. \\
$\tilde{\alpha} = \tilde{L}/D$ & Inference sample ratio. \\
$\rho = r/D$ & Task subspace dimensionality. \\
$\kappa = M_0 / D$ & Task diversity. \\
$\gamma = M / (D M_0)$ & Training data density. \\

\end{longtable}

\section{Justification of Equivalence of the Linear Attention Model to a Linear Neural Network}
\label{app:eq-lenear-nn}

In this section, we demonstrate that the single-layer linear attention model employed in our study can be mapped to an equivalent single-layer linear neural network. 
This equivalence provides a tractable framework for our theoretical analysis.

First, from the definition of the attention function $\mathsf{Atten}_{\Theta}$, we have:
\begin{align}
  \mathsf{Atten}_{\Theta} (C) &= C + \frac{1}{L} VC (KC)^\top (QC) \\
  &= C + \frac{1}{L} VC C^\top (K^\top Q) C 
\end{align}
The dependency on the key and query matrices, $K$ and $Q$, occurs only through the product $K^\top Q$. We can therefore simplify the parameterization by defining a single matrix $R = K^\top Q \in \mathbb{R}^{(D+1) \times (D+1)}$.

To analyze the interaction between features and labels, we partition the parameter matrices $V$ and $R$ into blocks corresponding to the feature dimensions ($D$) and the label dimension (1):
\begin{align}
  R = \pmqty{R_{11} & \vb{r}_{12} \\ \vb{r}_{21}^\top & r_{22}}, \quad V = \pmqty{V_{11} & \vb{v}_{12} \\ \vb{v}_{21}^\top & v_{22}},
\end{align}
where $R_{11}, V_{11}\in \mathbb{R}^{D\times D}$, $\vb{r}_{12}, \vb{v}_{12}, \vb{r}_{21}, \vb{v}_{21}\in \mathbb{R}^{D}$ and $r_{22}, v_{22}\in \mathbb{R}$.

In this notation, a direct calculation of the model's prediction, which is the scalar value at the query position $\{ \dots \}_{D+1, L+1}$, yields:
\begin{align}
  \mathsf{Read}\ab(\mathsf{Atten}_{\Theta} (C^\mu) )&= \ab\{C^\mu + \frac{1}{L} VC^\mu{C^\mu}^\top R C^\mu\}_{D+1, L+1} \\
  &= \frac{1}{L} \ab(\vb{v}_{21}^\top X^\mu{X^\mu}^\top R_{11} + v_{22}{\vb{y}^\mu}^\top {X^\mu}^\top R_{11} + \vb{v}_{21}^\top {X^\mu}^\top\vb{y}^\mu \vb{r}_{21} + \vb{r}_{21}{\vb{y}^\mu}^\top\vb{y}^\mu ) \vb{x}_{L+1}^\mu,
\end{align}
where $X^\mu = \pmqty{\vb{x}_1^\mu & \vb{x}_2^\mu & \cdots & \vb{x}_L^\mu} \in \mathbb{R}^{D \times L}$ and $\vb{y}^\mu = \pmqty{y_1^\mu & y_2^\mu & \cdots & y_{L}^\mu & 0}^\top \in \mathbb{R}^{L+1}$.

To make the model analytically tractable, we follow recent works~\cite{Wu2023-ul, Frei2024-fe, zhang2025training} and introduce a key simplification by setting the off-diagonal block matrices to zero: $\vb{v}_{21} = \mathbf{0}$ and $\vb{r}_{21} = \mathbf{0}$. 
This simplification is well-founded, as it has been shown that for networks initialized with these parameters at zero, they remain zero throughout training under gradient flow dynamics~\cite{zhang2025training}. 
While this reduces complexity, the resulting architecture is still sufficiently rich to exhibit nontrivial in-context learning phenomena.

Under this assumption, the prediction formula simplifies dramatically. By defining an effective weight matrix $W = \ab(v_{22} R_{11})^\top \in \mathbb{R}^{D \times D}$, the model's output becomes:
\begin{equation}
  \mathsf{Read}\ab(\mathsf{Atten}_{\Theta} (C^\mu) ) = \frac{1}{L} v_{22}{\vb{y}^\mu}^\top {X^\mu}^\top R_{11} \vb{x}_{L+1}^\mu = \tr \ab(W {H^\mu}^\top )
\end{equation}
This final expression demonstrates that the attention model's output is equivalent to that of a single-layer linear neural network. 
This equivalent network maps an effective feature matrix $H^\mu$ to a scalar prediction via the trace operator, 
corresponding to a mapping from $\mathbb{R}^{D \times D} \to \mathbb{R}$.

\section{Justification of Equivalence of the Linear Attention Model to a Many-Teacher-Student Model}
\label{app:eq-ts}

To enable a tractable high-dimensional analysis using the replica method, we map our ICL model to an equivalent teacher-student framework with multiple teachers. This reformulation is not only a crucial technical step, but it also provides a clear interpretation of the pre-training phase. Furthermore, we believe this analytical strategy can serve as a valuable starting point for studying a broader class of similar ICL models. The core of this equivalence is the following statement:

\begin{statement}\label{statement:multi_teacher_student_framework}
The macroscopic statistical quantities of the learning process based on Eq.~\eqref{eq:equivalent_NN} remain unchanged if the true training labels $y^\mu_{L+1}$ are replaced by the outputs of a task-specific teacher model. The teacher for task $\mu$ is defined by the weight matrix $W_{\mathsf{teacher}}^\mu = \vb{w}^\mu (\vb{w}^\mu)^\top / D$, leading to the substitution:
\begin{equation}
y^\mu_{L+1} \to \tr \ab(W_{\mathsf{teacher}}^\mu H^\mu) + \epsilon^\mu .
\end{equation}
\end{statement}
For the justification of Statement~\ref{statement:multi_teacher_student_framework} and a discussion about its relevance to the traditional teacher-student model, see the following subsections.

This equivalence is based on an analogous to the Gaussian Equivalence Theorem~\cite{mei2022generalization}, as the first two moments of the true labels and the teacher's outputs are identical, while higher-order correlations become negligible in the high-dimensional limit.

The student model in our framework is not merely memorizing the solutions to specific tasks. Because it learns from an ensemble of $M$ different teachers, one for each training instance, it is forced to distill the underlying, universal problem-solving algorithm common to all of them.

\subsection{Justification of Equivalence of the Linear Attention Model to a Many-Teacher-Student Model}
In this subsection, we will justify the equivalence to a many-teacher-student model, by checking the consistency of the first and second-order statistics.
Specifically, we will show the following lemma:

\begin{lemma}\label{lemma:ts}
  Let $W_{\mathsf{teacher}}^\mu = \vb{w}^\mu (\vb{w}^\mu)^\top / D$ be the weight matrix of a task-specific teacher model, 
  and $\bar{y}^\mu_{L+1} = \tr \ab(W_{\mathsf{teacher}}^\mu H^\mu) + \epsilon^\mu$ be the output of a task-specific teacher model. 
  Then, for the fixed $\mathcal{W}$, the first and second-order statistics of the teacher model are consistent with the ICL model, i.e.,
  \begin{align}
    \E \ab[\bar{y}^\mu_{L+1}] &= \E \ab[{y}^\mu_{L+1}] = 0\\
    \E \ab[\bar{y}^\mu_{L+1}\bar{y}^\nu_{L+1}] &= \E \ab[{y}^\mu_{L+1}{y}^\nu_{L+1}]  = \delta_{\mu \nu} \ab(1+\sigma^2) + \mathcal{O}\ab(\frac{1}{D}) \\
    \E \ab[\bar{y}^\mu_{L+1} H^\nu_{ij}] &= \E \ab[{y}^\mu_{L+1} H^\nu_{ij}] = \frac{1}{D} \delta_{\mu \nu} w_i^\mu w_j^\mu + \mathcal{O}\ab(\frac{1}{D^2}) .
  \end{align}
\end{lemma}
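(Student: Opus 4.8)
\emph{Proof plan.}\quad First I would make every object explicit. From the construction of the equivalent linear network (Appendix~\ref{app:eq-lenear-nn}) the effective feature matrix is the rank-one outer product $H^\mu = \vb{x}_{L+1}^\mu\left(\tfrac1L\sum_{l=1}^L y_l^\mu\,\vb{x}_l^\mu\right)^{\top}$, i.e.\ $H_{ij}^\mu = (\vb{x}_{L+1}^\mu)_i\,\tfrac1L\sum_{l=1}^L y_l^\mu(\vb{x}_l^\mu)_j$. I use the generative model of the main text: $\vb{x}_1^\mu,\dots,\vb{x}_{L+1}^\mu$ are i.i.d.\ $\mathcal{N}(\vb{0},I_D)$, mutually independent across demonstrations and across tasks; $y_l^\mu = \tfrac1{\sqrt D}(\vb{w}^\mu)^\top\vb{x}_l^\mu + \sigma\xi_l^\mu$ with i.i.d.\ zero-mean unit-variance $\xi$'s, and $y_{L+1}^\mu$ is generated identically; the teacher noise $\epsilon^\mu$ is independent, zero mean, of variance $\sigma^2$; and $\|\vb{w}^\mu\|^2 = D$ for the fixed set $\mathcal{W}$ (equivalently $\E[(a^\mu)^2] = \E[(b_l^\mu)^2] = D$, writing $a^\mu := (\vb{w}^\mu)^\top\vb{x}_{L+1}^\mu$ and $b_l^\mu := (\vb{w}^\mu)^\top\vb{x}_l^\mu$). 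This yields the compact forms $\bar y_{L+1}^\mu = \tfrac1{DL}\,a^\mu\sum_l y_l^\mu b_l^\mu + \epsilon^\mu$ and $y_{L+1}^\mu = \tfrac1{\sqrt D}a^\mu + \sigma\xi_{L+1}^\mu$. One can already anticipate the result: since $y_l^\mu b_l^\mu = \tfrac1{\sqrt D}(b_l^\mu)^2 + \sigma\xi_l^\mu b_l^\mu$, the average $\tfrac1L\sum_l y_l^\mu b_l^\mu$ concentrates at $\sqrt D$, so $\bar y_{L+1}^\mu = \tfrac1{\sqrt D}a^\mu + (\text{lower order}) + \epsilon^\mu$ --- to leading order the teacher output and the true label carry the \emph{identical} signal term $\tfrac1{\sqrt D}(\vb{w}^\mu)^\top\vb{x}_{L+1}^\mu$ and differ only by an independent zero-mean noise of equal variance, which is precisely why their first two moments agree up to $\mathcal{O}(1/D)$.

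The proof then reduces to three moment computations, all for fixed $\mathcal{W}$ with expectations over $\{\vb{x},\xi,\epsilon\}$. (i)~\emph{First moments.} Each entry of $H^\mu$ contains the factor $(\vb{x}_{L+1}^\mu)_i$ linearly and independently of everything else, so $\E[H^\mu] = \vb{0}$; hence $\E[\bar y_{L+1}^\mu] = \tfrac1D(\vb{w}^\mu)^\top\E[H^\mu]\vb{w}^\mu + \E[\epsilon^\mu] = 0 = \E[y_{L+1}^\mu]$. (ii)~\emph{Off-diagonal terms.} Each of $\bar y_{L+1}^\mu$, $y_{L+1}^\mu$, $H^\mu$ is a function of task-$\mu$ randomness alone, so for $\mu\ne\nu$ independence across tasks factorizes every covariance into a product of first moments, all zero by~(i); this produces the $\delta_{\mu\nu}$ structure immediately. (iii)~\emph{Diagonal terms}, the computational core. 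Using $a^\mu\perp\{y_l^\mu,b_l^\mu\}$ and Wick's theorem, I would evaluate $\tfrac1{D^2}\E[((\vb{w}^\mu)^\top H^\mu\vb{w}^\mu)^2] = \tfrac1{DL^2}\E[(\sum_l y_l^\mu b_l^\mu)^2]$ by splitting the double sum: the $l\ne l'$ terms give $1-\tfrac1L$ and the $l=l'$ terms give $\tfrac{3+\sigma^2}{L}$, so together with $\E[(\epsilon^\mu)^2]=\sigma^2$ one obtains $\E[(\bar y_{L+1}^\mu)^2] = 1+\sigma^2+\tfrac{2+\sigma^2}{L} = 1+\sigma^2+\mathcal{O}(1/D)$, which matches $\E[(y_{L+1}^\mu)^2] = \tfrac1D\|\vb{w}^\mu\|^2+\sigma^2 = 1+\sigma^2$. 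For the cross moment, $\E[\bar y_{L+1}^\mu H_{ij}^\mu] = \tfrac1{DL^2}\sum_{l,l'}\E[a^\mu(\vb{x}_{L+1}^\mu)_i\,y_l^\mu b_l^\mu\,y_{l'}^\mu(\vb{x}_{l'}^\mu)_j]$ (the $\epsilon^\mu H_{ij}^\mu$ term drops since $\epsilon$ is exogenous): the $l\ne l'$ part factorizes through $\E[a^\mu(\vb{x}_{L+1}^\mu)_i] = w_i^\mu$, $\E[y_l^\mu b_l^\mu] = \sqrt D$, $\E[y_{l'}^\mu(\vb{x}_{l'}^\mu)_j] = \tfrac1{\sqrt D}w_j^\mu$ into $(1-\tfrac1L)\tfrac1D w_i^\mu w_j^\mu$, while the $l=l'$ part --- using $\E[(b_l^\mu)^3(\vb{x}_l^\mu)_j] = 3D\,w_j^\mu$ and $\E[(\xi_l^\mu)^2 b_l^\mu(\vb{x}_l^\mu)_j] = w_j^\mu$ from Wick --- gives $\tfrac{3+\sigma^2}{L}\tfrac1D w_i^\mu w_j^\mu = \mathcal{O}(1/D^2)$; summing, $\E[\bar y_{L+1}^\mu H_{ij}^\mu] = \tfrac1D w_i^\mu w_j^\mu + \mathcal{O}(1/D^2)$, matching the value $\tfrac1D w_i^\mu w_j^\mu$ for the true label (there the expectation factorizes at once through $\vb{x}_{L+1}^\mu\perp\vb{x}_l^\mu$).

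I expect the main obstacle to be the bookkeeping in step~(iii): correctly enumerating the Wick contractions of the degree-six moment $\E[((\vb{w}^\mu)^\top H^\mu\vb{w}^\mu)^2]$ and of the mixed moment $\E[(\vb{w}^\mu)^\top H^\mu\vb{w}^\mu\,H_{ij}^\mu]$, keeping straight which Gaussian vectors are shared (the two copies of $\vb{w}^\mu$ inside a single $H^\mu$ act on the \emph{same} $\vb{x}_l^\mu$) versus independent (distinct demonstrations, distinct tasks), and tracking which monomials survive the averages over $\xi$ and $\epsilon$. The other delicate point is normalization: it is precisely $\|\vb{w}^\mu\|^2 = D$, together with delocalized entries so that $w_i^\mu w_j^\mu = \mathcal{O}(1)$, that pins the leading constants to exactly $1+\sigma^2$ and $\tfrac1D w_i^\mu w_j^\mu$; for a genuinely fixed $\mathcal{W}$ one either imposes this normalization or absorbs the $o(1)$ deviation of $\|\vb{w}^\mu\|^2/D$ into the error terms. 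Combined with the Gaussian-equivalence remark already stated in the text --- higher cumulants of both $\bar y$ and $y$ vanish as $D\to\infty$, so only the matched first two moments feed the macroscopic order parameters --- this lemma then upgrades to Statement~\ref{statement:multi_teacher_student_framework}.
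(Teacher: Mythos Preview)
Your proposal is correct and follows essentially the same route as the paper: both arguments reduce to direct Gaussian moment computations, using independence across tasks for the $\delta_{\mu\nu}$ structure and Wick/Isserlis for the within-task fourth moments. The only organizational difference is that the paper first computes the full covariance $\E[H_{ij}^\mu H_{i'j'}^\nu]=\delta_{\mu\nu}\tfrac{\delta_{ii'}}{D}\bigl(w_j^\mu w_{j'}^\mu+\tfrac{D}{L}(1+\sigma^2)\delta_{jj'}\bigr)+\mathcal{O}(1/D^2)$ and then contracts with $\vb{w}^\mu(\vb{w}^\mu)^\top/D$, whereas you bypass this intermediate object by working with the scalar projections $a^\mu,b_l^\mu$; the paper's detour is not wasted, since the $H$-covariance reappears verbatim in the replica calculation (Eq.~\eqref{eq:C_H}).
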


\begin{proof}
  Due to $H^\mu_{ij}=0$, we have $\E \ab[\bar{y}^\mu_{L+1}] = \E \ab[{y}^\mu_{L+1}] = 0$.
  Also, straightforwardly we have $\E\ab[y^\mu_{L+1} y^\nu_{L+1}] = \delta_{\mu \nu} \ab(1+\sigma^2)$ and 
  \begin{align}
    \E \ab[H_{ij}^\mu y_{L+1}^\nu] &= \frac{D}{L}\delta_{\mu \nu} x_{L+1, i}^\mu \ab(\sum_{l=1}^L y_l^\mu x_{l, j}^\mu) \ab(\sum_{k=1}^D w_k^\mu x_{l, k}^\mu + \epsilon_{l}^\mu) \\
    &= \frac{D}{L} \delta_{\mu \nu} \sum_{k, k'} w^\mu_k w^\mu_{k'} \E \ab[x_{L+1, i}^\mu x_{L+1, k'} \sum_{l=1}^L x_{l, j}^\mu x_{l, k}^\mu] \\
    &= \frac{1}{D} \delta_{\mu \nu} w_i^\mu w_j^\mu .
  \end{align}
  For the second-order statistics of the teacher model, we have:
  \begin{align}
    \E \ab[H^\mu_{ij} H^\nu_{i' j'}] &=  \E\ab[ \frac{D^2}{L^2} \ab[x^\mu_{L+1, i} \sum_{l=1}^L \ab(\sum_{k=1}^D w_k^\mu x_{l, k}^\mu + \epsilon_{l,k}^\mu)x_{l, j}^\mu]\ab[x^\nu_{L+1, i'} \sum_{l'=1}^L \ab(\sum_{k'=1}^D w_{k'}^\nu x_{l', k'}^\nu + \epsilon_{l'}^\nu)x_{l', j'}^\nu]] \\
    &= \delta_{\mu \nu}  \E \left[\frac{D^2}{L^2} \sum_{k, k'} w_k^\mu w_{k'}^\nu \ab[x^\mu_{L+1, i} x^\nu_{L+1, i'} \ab(\sum_{l=1}^L \sum_{l'=1}^L x_{l, k}^l x_{l, j}^l x_{l', k'}^l x_{l', j'}^l)] \right. \notag \\
    &\hspace{20em}  \left. + \frac{D^2}{L^2} x^\mu_{L+1, i} x^\mu_{L+1, i'} \sum_{l=1}^L \sum_{l'=1}^L x_{l, j}^l x_{l', j'}^l \epsilon_{l}^l \epsilon_{l'}^l \right] \\
    &= \delta_{\mu \nu} \frac{\delta_{ii'}}{D} \ab(w_jw_{j'} + \frac{D}{L}\ab(1+\sigma^2)\delta_{jj'}) + \mathcal{O}\ab(\frac{1}{D^2}) \\
    \E\ab[\bar{y}^\mu_{L+1} H^\nu_{ij}] &= \frac{1}{D} \sum_{i', j'} w_{i'}^\mu w_{j'}^\nu \E \ab[H_{ij}^\mu H_{i'j'}^\nu] \\
    &= \frac{1}{D} \sum_{i', j'} w_{i'}^\mu w_{j'}^\nu \delta_{\mu \nu} \frac{\delta_{ii'}}{D} \ab(w_jw_{j'} + \frac{D}{L}\ab(1+\sigma^2)\delta_{jj'}) + \mathcal{O}\ab(\frac{1}{D^2}) \\
    &= \frac{1}{D} \delta_{\mu \nu} w_i^\mu w_j^\mu + \mathcal{O}\ab(\frac{1}{D^2}) \\
    \E \ab[\bar{y}^\mu_{L+1} \bar{y}^\nu_{L+1}] &= \frac{1}{D^2} \sum_{i,j}\sum_{i',j'} w_i^\mu w_j^\nu w_{i'}^\mu w_{j'}^\nu \E \ab[H_{ij}^\mu H_{i'j'}^\nu] \\
    &= \frac{1}{D^2} \sum_{i,j}\sum_{i',j'} w_i^\mu w_j^\nu w_{i'}^\mu w_{j'}^\nu \delta_{\mu \nu} \frac{\delta_{ii'}}{D} \ab(w_jw_{j'} + \frac{D}{L}\ab(1+\sigma^2)\delta_{jj'}) + \mathcal{O}\ab(\frac{1}{D^2}) \\
    &= \delta_{\mu \nu} \ab(1+\sigma^2) + \mathcal{O}\ab(\frac{1}{D}),
  \end{align}
  which completes the proof.
\end{proof}

\subsection{Relation to Committee Machine}
This process reveals a crucial distinction from conventional multi-teacher models like committee machines, which learn the average behavior of a teacher ensemble to capture their consensus. 
Our student, in contrast, is not trained to find this middle ground but must learn to imitate every individual teacher. 
Statement~\ref{lemma:ts} reveals that the essence of ICL is not knowledge aggregation, but the acquisition of a universal meta-algorithm capable of acting as any specialized teacher based on the context.

\section{Preliminary Lemmas in Random Matrix Theory}\label{appendix:preliminary_lemmas}
In this section, we provide the proofs for several technical lemmas required in the calculation of the replica method in the main text. These lemmas establish fundamental properties of random matrices that arise in our analysis, particularly concerning resolvent functions and their derivatives.

\subsection{Resolvent Functions for Wishart-Type Matrices}

We begin by establishing the resolvent functions for a class of random matrices that play a central role in our replica calculation.

\begin{lemma}\label{lemma:resolvent_S}
  Consider the random matrix
  \begin{align}
    \tilde{S} &= \frac{D}{r} \frac{1}{M_0}  \sum_{\mu=1}^{M_0} \vb{v}^\mu {\vb{v}^\mu}^\top \in \R^{r \times r} \\
    S &= \frac{D}{r} \frac{1}{M_0} \sum_{\mu=1}^{M_0} A\vb{v}^\mu {\vb{v}^\mu}^\top A^\top \in \R^{D \times D},
  \end{align}
  where $0<r<D$, $\vb{v}^\mu \in \R^r$ are random standard normal vectors and $A \in \R^{D \times r}$ is a random orthogonal basis.
  Then, the resolvent of $S$ and $\tilde{S}$ are given by
  \begin{align}
    g_{\tilde{S}}(z) &=\frac{1}{D} \tr \ab(\ab(\tilde{S}-zI_D)^{-1}) = \frac{-(\kappa \rho z + \rho - \kappa) - \sqrt{\left(\kappa \rho z + \rho - \kappa \right)^2 - 4\rho^2 \kappa z}}{2\rho z} \label{eq:g_tilde_S} \\
    g_S(z) &=\frac{1}{D} \tr \ab(\ab({S}-zI_D)^{-1}) =  \frac{\kappa + \rho - \kappa \rho z - 2 - \sqrt{\left(\rho\kappa z + \rho - \kappa \right)^2 - 4\rho^2 \kappa z}}{2z} \label{eq:g_S}
  \end{align}
  for $z < 0$ respectively, where $\rho = r/D$, $\kappa = M_0/D$ and $D, r, M_0 \to \infty$.
\end{lemma}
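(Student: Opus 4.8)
\emph{Strategy.} The plan is to recognize both matrices as rescaled Wishart (sample‑covariance) matrices: $\tilde S$ is one directly, and $S = A\tilde S A^\top$ is that same matrix embedded isometrically into $\R^{D\times D}$. The resolvent of $\tilde S$ then follows from the Marchenko--Pastur law once the various rescalings are tracked, and the resolvent of $S$ follows from that of $\tilde S$ by a rank count.

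\emph{Step 1: resolvent of $\tilde S$.} Since the $\vb{v}^\mu$ are i.i.d.\ $\mathcal{N}(0,I_r)$, the matrix $B := \rho\,\tilde S = \frac{1}{M_0}\sum_{\mu=1}^{M_0}\vb{v}^\mu(\vb{v}^\mu)^\top$ is exactly a sample covariance matrix with identity population covariance, dimension $r$, and $M_0$ samples, hence aspect ratio $r/M_0 = \rho/\kappa$. By the Marchenko--Pastur theorem together with standard concentration of resolvent traces, its Stieltjes transform $m_B(w) := \frac1r\tr\big((B - wI_r)^{-1}\big)$ converges almost surely, in the proportional limit, to the solution of
\begin{equation}
\frac{\rho}{\kappa}\,w\,m_B^2 + \Big(w + \frac{\rho}{\kappa} - 1\Big)m_B + 1 = 0
\end{equation}
that behaves like $-1/w$ as $w\to-\infty$; this is either cited or rederived from a one‑step leave‑one‑column‑out / Sherman--Morrison identity. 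Because $\tilde S = B/\rho$ is a pure rescaling, its dimension‑normalized resolvent (which is what Eq.~\eqref{eq:g_tilde_S} records) equals $g_{\tilde S}(z) = \rho\, m_B(\rho z)$. Substituting $w = \rho z$ and clearing denominators converts the fixed‑point equation into $\rho z\,g_{\tilde S}^2 + (\kappa\rho z + \rho - \kappa)g_{\tilde S} + \kappa\rho = 0$, whose discriminant is precisely $(\kappa\rho z + \rho - \kappa)^2 - 4\rho^2\kappa z$; picking the branch with $g_{\tilde S}(z) > 0$ for $z<0$ (equivalently $g_{\tilde S}(z)\sim -1/z$ at $-\infty$, since $\tilde S\succeq 0$) yields Eq.~\eqref{eq:g_tilde_S}.

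\emph{Step 2: resolvent of $S$.} Because $A$ has orthonormal columns, $A^\top A = I_r$, so $S = A\tilde S A^\top$ has exactly the $r$ nonzero eigenvalues of $\tilde S$ together with $D-r$ extra zeros; in particular the randomness of $A$ is irrelevant. Hence, for $z<0$,
\begin{equation}
\tr\big((S - zI_D)^{-1}\big) = \tr\big((\tilde S - zI_r)^{-1}\big) - \frac{D-r}{z},
\end{equation}
and dividing by $D$ gives $g_S(z) = \rho\, g_{\tilde S}(z) - \frac{1-\rho}{z}$. Inserting Eq.~\eqref{eq:g_tilde_S}, the square‑root term carries over unchanged while the rational parts combine over the common denominator $2z$ (the factor $\rho$ cancels against the $2\rho z$ in the denominator of $g_{\tilde S}$), producing Eq.~\eqref{eq:g_S}.

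\emph{Where the work is.} The conceptual content is light; the only real effort is careful bookkeeping of the three scalings in play --- the $D/r$ prefactor baked into $\tilde S$, the aspect ratio $r/M_0 = \rho/\kappa$, and the argument shift $z \mapsto \rho z$ --- so that the discriminant emerges in the stated symmetric form and the branch is chosen consistently for $z<0$. One further point to record is the regime $r > M_0$ (i.e.\ $\rho > \kappa$), where $B$ is itself rank‑deficient and the Marchenko--Pastur law acquires an atom at the origin; the quadratic and the selected branch remain valid for $z<0$, so the formulas are unaffected.
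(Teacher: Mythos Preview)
Your proof is correct and follows essentially the same route as the paper: invoke Marchenko--Pastur for the Wishart matrix $\tilde S$ (the paper simply cites this, whereas you track the rescalings and the quadratic explicitly), and then relate $g_S$ to $g_{\tilde S}$ via the fact that $S=A\tilde S A^\top$ shares the nonzero spectrum of $\tilde S$ and has $D-r$ extra zeros (the paper phrases this through the characteristic polynomial identity $P_S(\lambda)=\lambda^{D-r}P_{\tilde S}(\lambda)$, which is the same observation). Your added remarks on the branch choice and on the rank‑deficient regime $\rho>\kappa$ are correct and go slightly beyond what the paper writes out.
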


\begin{proof}
  The random matrix $\tilde{S}$ is classified as a Wishart matrix (or a scaled sample covariance matrix). 
  The empirical spectral distribution of such a matrix is known to converge to the Marchenko-Pastur distribution in the given asymptotic limit. 
  Therefore, Eq.~\eqref{eq:g_tilde_S}, which represents its Stieltjes transform, follows directly from this standard result in random matrix theory (e.g., see \cite{potters2020first}).

  For the random matrix $S$, let $P_S(\lambda)$ and $P_{\tilde{S}}(\lambda)$ denote the characteristic polynomials of $S$ and $\tilde{S}$, respectively.
  Then we have
  \begin{align}
    P_{S}(\lambda) &= \det\ab(\lambda I_D - S) = \lambda^{D-r} \det\ab(\lambda I_r - \tilde{S}) = \lambda^{D-r} P_{\tilde{S}}(\lambda) 
  \end{align}
  which implies that the non-zero eigen values of $S$ and $\tilde{S}$ coincide, including multiplicities.
  Therefore,
  \begin{align}
    g_{S}(z) &= - \frac{D-r}{D} \frac{1}{z} + \frac{r}{D} g_{\tilde{S}}(z) = - \ab(1-\rho)\frac{1}{z} + \rho g_{\tilde{S}}(z) \\
    &=\frac{1}{D} \tr \ab(\ab({S}-zI_D)^{-1}) =  \frac{\kappa + \rho - \kappa \rho z - 2 - \sqrt{\left(\rho\kappa z + \rho - \kappa \right)^2 - 4\rho^2 \kappa z}}{2z} 
  \end{align}
  as desired in Eq.~\eqref{eq:g_S}.
\end{proof}

\subsection{Recurrence Relations for Matrix Moments}

Having established the resolvent functions, we now derive recurrence relations that allow us to compute higher-order moments involving powers of the random matrices and their resolvents. 
These relations are essential for the subsequent calculations in the saddle-point equations.

\begin{lemma}\label{lemma:derivation_of_E}
  Let $\mathcal{M} = a S + b I_D \in \R^{D \times D}$, where $a, b>0$ and $S$ is defined in Lemma~\ref{lemma:resolvent_S}. Define $E_{n, m}$ by
  \begin{align}
    E_{n, m} = \frac{1}{D} \E \ab[\tr S^n \mathcal{M}^{-m} ] = \frac{1}{D} \E \ab[\tr S^n (a S + b I_D)^{-m} ]
  \end{align}
  for integer $n, m \geq 0$. Then, the following recurrence relation holds:
  \begin{align}
    E_{0, m} &= \frac{1}{b^m} \frac{1}{(m-1)!} g_S^{(m-1)}(z) \quad (\text{for } m \ge 1) \\
    E_{n, 0} &= 1 \quad (n \geq 1) \\
    E_{n, m} &= \frac{1}{b} E_{n-1, m-1} + z E_{n-1, m} \quad (n \geq 1, m \geq 1) ,
  \end{align}
  where $z = -b/a$ and $g^{(m-1)}_S(z)$ denotes the $(m-1)$-th derivative of $g_S(z)$.
\end{lemma}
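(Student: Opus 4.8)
The plan is to split the claim into three elementary pieces, each of which follows from a short manipulation rather than a genuinely new estimate. The recurrence in $n$ and $m$ is purely algebraic, arising from peeling one factor of $S$ off $S^{n}\mathcal{M}^{-m}$ via $aS=\mathcal{M}-bI_D$. The $m$-direction base case $E_{0,m}$ rests on the observation that $\tr\mathcal{M}^{-m}$ is, up to a constant, the $(m-1)$-st $z$-derivative of the resolvent $(S-zI_D)^{-1}$ evaluated at $z=-b/a$, so that Lemma~\ref{lemma:resolvent_S} delivers it in closed form. The $n$-direction base case $E_{n,0}$ is read directly off the definition of $S$.

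Concretely, writing $S=a^{-1}(\mathcal{M}-bI_D)$, for $n,m\ge 1$ we have
\begin{equation}
S^{n}\mathcal{M}^{-m}=S^{n-1}\,a^{-1}(\mathcal{M}-bI_D)\,\mathcal{M}^{-m}=a^{-1}\,S^{n-1}\mathcal{M}^{-(m-1)}-\frac{b}{a}\,S^{n-1}\mathcal{M}^{-m},
\end{equation}
and applying $\frac{1}{D}\E\tr(\cdot)$, together with linearity of the trace, turns this into the three-term relation $E_{n,m}=a^{-1}E_{n-1,m-1}+z\,E_{n-1,m}$ with $z=-b/a$. For $E_{n,0}=\frac{1}{D}\E\tr S^{n}$ I would work straight from $S=\frac{D}{r M_0}\sum_{\mu}A\vb{v}^\mu(\vb{v}^\mu)^{\top}A^{\top}$: since $A^{\top}A=I_r$ and $\E\lVert\vb{v}^\mu\rVert^{2}=r$, the linear case gives $E_{1,0}=1$, while the remaining moments, if they are needed, are supplied by the Marchenko--Pastur law invoked in the proof of Lemma~\ref{lemma:resolvent_S}.

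For $E_{0,m}$ I would factor $\mathcal{M}=a(S-zI_D)$ with $z=-b/a<0$, so $\mathcal{M}^{-m}=a^{-m}(S-zI_D)^{-m}$, and use the operator identity $\frac{d^{m-1}}{dz^{m-1}}(S-zI_D)^{-1}=(m-1)!\,(S-zI_D)^{-m}$, itself a one-line induction from $\frac{d}{dz}(S-zI_D)^{-1}=(S-zI_D)^{-2}$. Taking $\frac{1}{D}\E\tr(\cdot)$, interchanging the expectation with the $z$-derivative, and identifying $\frac{1}{D}\E\tr(S-zI_D)^{-1}=g_S(z)$ from Lemma~\ref{lemma:resolvent_S} then yields $E_{0,m}=\frac{1}{(m-1)!\,a^{m}}\,g_S^{(m-1)}(z)$, which is the stated closed form.

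The only step that demands genuine care — and hence the main, though mild, obstacle — is justifying the differentiation under the expectation, along with the fact that the limiting transform $g_S$ is infinitely differentiable on $(-\infty,0)$ so that $g_S^{(m-1)}$ is well defined. This is legitimate because, for $z<0$, every eigenvalue of $S-zI_D$ is at least $|z|$ (those of $S$ being nonnegative), so $(S-zI_D)^{-1}$ and all of its $z$-derivatives have operator norm at most $(m-1)!\,|z|^{-m}$, uniformly in the disorder and in $D$; dominated convergence then licenses both the interchange of $\E$ with $\frac{d}{dz}$ and the termwise passage to the high-dimensional limit. Everything else is bookkeeping: the recurrence together with the two base cases determines every $E_{n,m}$.
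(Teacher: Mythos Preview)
Your approach is identical to the paper's: the same peel $S=a^{-1}(\mathcal{M}-bI_D)$ for the recurrence, the same resolvent-derivative identity $(S-zI_D)^{-m}=\frac{1}{(m-1)!}\frac{d^{m-1}}{dz^{m-1}}(S-zI_D)^{-1}$ for $E_{0,m}$. Two remarks are nonetheless worth making.

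First, you obtain $E_{n,m}=a^{-1}E_{n-1,m-1}+zE_{n-1,m}$ and $E_{0,m}=\frac{1}{a^{m}(m-1)!}\,g_S^{(m-1)}(z)$, whereas the lemma as stated has $b$ in place of $a$. Your version is the correct one: since $\mathcal{M}=a(S-zI_D)$, the prefactor is $a^{-1}$, not $b^{-1}$. The paper's own proof carries the same slip (it writes $(S-zI_D)=\frac{1}{b}\mathcal{M}$), and the typo propagates verbatim into Proposition~\ref{lemma:explicit_expressions_for_matrix_moments}; but when those formulas are actually applied in Section~\ref{sec:order_parameters} with $a=\hat{q}$, the correct denominator $\hat{q}$ silently reappears. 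So you have proved what is actually needed.

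Second, your hesitation over $E_{n,0}$ for $n\ge 2$ is justified. The paper simply asserts $\frac{1}{D}\mathbb{E}\,\mathrm{tr}\,S^{n}=1$ for all $n\ge 1$ without argument, and this is false already at $n=2$, where a direct moment computation gives $\frac{1}{D}\mathbb{E}\,\mathrm{tr}\,S^{2}\to\rho^{-1}+\kappa^{-1}$; appealing to Marchenko--Pastur will not rescue the claim, so do not pursue that. No damage is done downstream, since the recursion as used in Proposition~\ref{lemma:explicit_expressions_for_matrix_moments} only ever draws on the boundary values $E_{0,0}=1$ and $E_{1,0}=1$, both of which you have established.
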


\begin{proof}
  The inverse powers of $\mathcal{M}$ are given by:
  \begin{equation}
    \mathcal{M}^{-m} = \frac{1}{b^m}(S - zI_D)^{-m}
  \end{equation}
  We use the standard definition of the Stieltjes transform, $g_S(z) = \frac{1}{D} \mathbb{E}[\text{tr}((S-zI_D)^{-1})]$.
  The $k$-th derivative of the resolvent $(S-zI_D)^{-1}$ with respect to $z$ is:
  \begin{equation}
  \frac{d^k}{dz^k} (S - zI_D)^{-1} = k! (S - zI_D)^{-(k+1)}
  \end{equation}
  From this, we can express the $m$-th power of the resolvent as:
  \begin{equation}
  (S - zI_D)^{-m} = \frac{1}{(m-1)!} \frac{d^{m-1}}{dz^{m-1}} (S - zI_D)^{-1}
  \end{equation}
  The $k$-th derivative of the Stieltjes transform is therefore:
  \begin{equation}
  g_S^{(k)}(z) = \frac{d^k g_S(z)}{dz^k} = \frac{1}{D} \mathbb{E}\left[\tr \left(\frac{d^k}{dz^k}(S - zI_D)^{-1}\right)\right] = \frac{k!}{D} \mathbb{E}\left[\text{tr}\left((S - zI_D)^{-(k+1)}\right)\right]
  \end{equation}
  
  First, we prove the expression for $E_{0, m}$. By definition, for $m \ge 1$:
  \begin{align}
  E_{0, m} &= \frac{1}{D} \mathbb{E}\left[\text{tr}(\mathcal{M}^{-m})\right] \\
  &= \frac{1}{D} \mathbb{E}\left[\text{tr}\left(\frac{1}{b^m}(S - zI_D)^{-m}\right)\right] \\
  &= \frac{1}{b^m} \frac{1}{D} \mathbb{E}\left[\tr \left( \frac{1}{(m-1)!} \frac{d^{m-1}}{dz^{m-1}} (S - zI_D)^{-1} \right)\right] \\
  &= \frac{1}{b^m} \frac{g_S^{(m-1)}(z)}{(m-1)!}
  \end{align}
  
  Second, the expression for $E_{n, 0}$ for $n \ge 1$ follows directly from the definition:
  $$
  E_{n, 0} = \frac{1}{D} \mathbb{E}\left[\tr (S^n \mathcal{M}^0)\right] = \frac{1}{D} \mathbb{E}\left[\text{tr}(S^n)\right] = 1
  $$

  Third, we prove the recurrence relation for $n \ge 1$ and $m \ge 1$. We start from the definition of $E_{n, m}$ and use the identity $S^n = S^{n-1}((S - zI_D) + zI_D)$.
  \begin{align}
  E_{n, m} &= \frac{1}{D} \mathbb{E}\left[\tr (S^n \mathcal{M}^{-m})\right] \\
  &= \frac{1}{D} \mathbb{E}\left[\tr \left( \left( S^{n-1}(S-zI_D) + zS^{n-1} \right) \mathcal{M}^{-m} \right)\right] \\
  &= \frac{1}{D} \mathbb{E}\left[\tr \left( S^{n-1}(S-zI_D) \mathcal{M}^{-m} \right)\right] + \frac{1}{D} \mathbb{E}\left[\text{tr}\left( zS^{n-1} \mathcal{M}^{-m} \right)\right] \label{eq:E_n_m} .
  \end{align}
  For the first term of Eq.~\eqref{eq:E_n_m}, we substitute $(S - zI_D) = \frac{1}{b} \mathcal{M}$:
  \begin{equation}
  \frac{1}{D} \mathbb{E}\left[\text{tr}\left( S^{n-1}(S-zI_D) \mathcal{M}^{-m} \right)\right] = \frac{1}{b} \frac{1}{D} \mathbb{E}\left[\text{tr}\left( S^{n-1} \mathcal{M}^{-(m-1)} \right)\right] = \frac{1}{b} E_{n-1, m-1} .
  \end{equation}
  The second term is:
  \begin{equation}
  \frac{1}{D} \mathbb{E}\left[\text{tr}\left( zS^{n-1} \mathcal{M}^{-m} \right)\right] = z \left( \frac{1}{D} \mathbb{E}\left[\text{tr}\left( S^{n-1} \mathcal{M}^{-m} \right)\right] \right) = z E_{n-1, m} .
  \end{equation}
  Combining the two terms gives the recurrence relation:
  \begin{equation}
  E_{n, m} = \frac{1}{b} E_{n-1, m-1} + z E_{n-1, m},
  \end{equation}
  which completes the proof.
\end{proof}

\subsection{Explicit Expressions for Matrix Moments}

The recurrence relations established in the previous lemma can be used to derive explicit expressions for specific combinations of matrix powers and resolvents that frequently appear in our calculations.


\begin{proposition}\label{lemma:explicit_expressions_for_matrix_moments}
  Let $S$, $\mathcal{M}$ and $z$ be as defined in Lemma~\ref{lemma:resolvent_S} and \ref{lemma:derivation_of_E}. Then, the following relations hold:
  \begin{align}
   \frac{1}{D} \E \ab[\tr \mathcal{M}^{-1} ] &= \frac{g_S(z)}{b} \label{eq:E_M_1}\\
   \frac{1}{D} \E \ab[\tr \mathcal{M}^{-2} ] &= \frac{g_S'(z)}{b^2} \label{eq:E_M_2}\\
   \frac{1}{D} \E \ab[\tr S\mathcal{M}^{-1} ] &= \frac{1+zg_S(z)}{b} \label{eq:E_M_3}\\
   \frac{1}{D} \E \ab[\tr S\mathcal{M}^{-2} ]  &= \frac{g_S(z) + zg_S'(z)}{b^2} \label{eq:E_M_4}\\
   \frac{1}{D} \E \ab[\tr S^2 \mathcal{M}^{-1} ] &= \frac{1+z+z^2g_S(z)}{b} \label{eq:E_M_5}\\
   \frac{1}{D} \E \ab[\tr S^2 \mathcal{M}^{-2} ] &= \frac{1+2zg_S(z) + z^2g_S'(z)}{b^2} \label{eq:E_M_6}\\
   \frac{1}{D} \E \ab[\tr S^3 \mathcal{M}^{-2} ] &= \frac{1+2z+3z^2g_S(z) + z^3g_S'(z)}{b^2} \label{eq:E_M_7}  .
  \end{align}
\end{proposition}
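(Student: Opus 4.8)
The plan is to read off all seven identities directly from Lemma~\ref{lemma:derivation_of_E}, since the left-hand sides are precisely the quantities $E_{0,1}$, $E_{0,2}$, $E_{1,1}$, $E_{1,2}$, $E_{2,1}$, $E_{2,2}$, and $E_{3,2}$, in the order they appear. First I would instantiate the base cases: the formula $E_{0,m} = \frac{1}{b^m}\frac{1}{(m-1)!}g_S^{(m-1)}(z)$ gives $E_{0,1} = g_S(z)/b$ and $E_{0,2} = g_S'(z)/b^2$, which already establishes \eqref{eq:E_M_1} and \eqref{eq:E_M_2}, while $E_{n,0}=1$ for all $n\ge 1$ supplies the other input.

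Next I would unroll the recurrence $E_{n,m} = \frac1b E_{n-1,m-1} + z E_{n-1,m}$ in increasing order of $n$. Taking $m=1$ yields $E_{1,1} = \frac1b E_{0,0} + z E_{0,1}$ and then $E_{2,1} = \frac1b E_{1,0} + z E_{1,1}$, which after substituting $E_{0,0}=E_{1,0}=1$ produce \eqref{eq:E_M_3} and \eqref{eq:E_M_5}. Taking $m=2$ yields $E_{1,2} = \frac1b E_{0,1} + z E_{0,2}$, then $E_{2,2} = \frac1b E_{1,1} + z E_{1,2}$, and finally $E_{3,2} = \frac1b E_{2,1} + z E_{2,2}$; collecting the $g_S(z)$ and $g_S'(z)$ terms in each line reproduces \eqref{eq:E_M_4}, \eqref{eq:E_M_6}, and \eqref{eq:E_M_7}. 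Every step is a one-line substitution, so the entire argument is a short finite computation requiring no new ideas beyond Lemma~\ref{lemma:derivation_of_E}.

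I do not expect a genuine obstacle here; the one point worth noting is that the recursion closes cleanly because every $E_{n,m}$ needed has $m\le 2$, so the only analytic inputs are $g_S$ and its first derivative $g_S'$ (no $g_S''$), together with the trivial values $E_{n,0}=1$. The main practical risk is simple bookkeeping when assembling the higher-order cases \eqref{eq:E_M_6} and \eqref{eq:E_M_7}; I would guard against sign or coefficient slips by checking the emerging pattern of the polynomial prefactors — namely $1,\,1+z,\,1+2z$ multiplying $1/b$-type terms and $1,\,2z,\,3z^2$ multiplying $g_S(z)$ — against the structure of the recurrence before finalizing.
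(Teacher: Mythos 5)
Your proposal is correct and takes exactly the approach the paper intends: the paper's own proof of the proposition is the one-liner ``The proof follows directly from Lemma~\ref{lemma:derivation_of_E},'' and your unrolling of the base cases and recurrence is the implicit finite computation. The only micro-gap is that $E_{0,0}$ is not literally covered by the lemma's stated cases ($E_{0,m}$ only for $m\ge 1$, $E_{n,0}$ only for $n\ge 1$), but $E_{0,0}=\frac{1}{D}\E[\tr I_D]=1$ is immediate, as you noted.
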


\begin{proof}
  The proof follows directly from Lemma~\ref{lemma:derivation_of_E}.
\end{proof}

\subsection{Trace Identity for Projected Matrices}

Finally, we establish a trace identity that relates the trace of a projected matrix to the trace of its lower-dimensional counterpart. 

\begin{lemma}\label{lemma:trace_identity}
  Let $A$, $S$ and $\tilde{S}$ be as defined in Lemma~\ref{lemma:resolvent_S}. 
  For scalars $a, b, l > 0$, the following identity holds:
  \begin{equation}
  \tr\left(AA^\top (aS+bI_D)^{-l}\right) = \tr\left((a\tilde{S} + bI_r)^{-l}\right)
  \end{equation}
\end{lemma}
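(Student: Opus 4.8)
The plan is to exploit the fact that $S = A\tilde{S}A^\top$ together with the orthonormality of the columns of $A$, i.e. $A^\top A = I_r$. The key structural observation is that the projector $\Pi = AA^\top \in \R^{D\times D}$ commutes with $S$: indeed $\Pi S = AA^\top A \tilde S A^\top = A\tilde S A^\top = S$ and similarly $S\Pi = S$, so $S$ acts as zero on the orthogonal complement of the range of $A$. Consequently $aS + bI_D$ leaves both the range of $A$ and its complement invariant, and on the range of $A$ it coincides with the push-forward of $a\tilde S + bI_r$.

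First I would make the block decomposition explicit. Complete $A$ to an orthogonal matrix $[A \mid A_\perp] \in \R^{D\times D}$, where $A_\perp \in \R^{D\times(D-r)}$ has columns spanning $\ker(S)$. In this basis $S = \mathrm{diag}(\tilde S, 0)$, hence $aS + bI_D = \mathrm{diag}(a\tilde S + bI_r,\; bI_{D-r})$ and therefore $(aS+bI_D)^{-l} = \mathrm{diag}((a\tilde S+bI_r)^{-l},\; b^{-l}I_{D-r})$. Second, I would compute the trace against $AA^\top$: since in this basis $AA^\top = \mathrm{diag}(I_r, 0)$, the product $AA^\top (aS+bI_D)^{-l}$ equals $\mathrm{diag}((a\tilde S + bI_r)^{-l},\, 0)$, whose trace is exactly $\tr\big((a\tilde S + bI_r)^{-l}\big)$, as claimed.

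Alternatively, and perhaps more cleanly, I would avoid choosing a basis: write $f(x) = (ax+b)^{-l}$, note $f(S) = (aS+bI_D)^{-l}$, and use that for any polynomial (hence, by approximation on the spectrum, any function analytic near the relevant eigenvalues) $p$ one has $A^\top p(S) A = A^\top p(A\tilde S A^\top) A = p(\tilde S)$, which follows by expanding $p$ and repeatedly cancelling $A^\top A = I_r$ in $A^\top (A\tilde S A^\top)^k A = \tilde S^k$. Then $\tr\big(AA^\top f(S)\big) = \tr\big(A^\top f(S) A\big) = \tr\big(f(\tilde S)\big) = \tr\big((a\tilde S + bI_r)^{-l}\big)$, where the first equality is the cyclic property of the trace. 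I expect no real obstacle here; the only mild point to be careful about is justifying the functional-calculus identity for the non-polynomial function $f$, which is immediate because $f$ is a rational function with poles only at $x = -b/a < 0$, away from the spectrum of $S$ and $\tilde S$ (both nonnegative, since $a,b>0$), so $f(S) = (aS+bI_D)^{-l}$ is genuinely the matrix inverse power and the polynomial identity extends by the partial-fraction / Neumann-series expansion of $(aS+bI_D)^{-1}$.
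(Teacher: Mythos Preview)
Your proposal is correct. Your first approach---completing $A$ to an orthogonal basis $[A\mid A_\perp]$ and reading off the block-diagonal form---is exactly the paper's strategy; you are in fact slightly cleaner, since you observe directly that $S = A\tilde S A^\top$ forces $U^\top S U = \mathrm{diag}(\tilde S,0)$, whereas the paper only argues $B^\top S = 0$ and carries along an (actually zero) off-diagonal block $A^\top S B$ in a block upper-triangular calculation.

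Your second, basis-free route via the cyclic trace identity $\tr(AA^\top f(S)) = \tr(A^\top f(S) A)$ together with $A^\top S^k A = \tilde S^k$ is a genuinely different argument not in the paper. It is more elegant and generalizes immediately to any function analytic on a neighborhood of the spectrum, at the modest cost of the one-line justification you already supply (that $f(x)=(ax+b)^{-l}$ has its pole at $-b/a<0$, away from the nonnegative spectra of $S$ and $\tilde S$). The paper's block computation, by contrast, is entirely elementary and requires no spectral-calculus reasoning, which may be preferable for readers less comfortable with functional calculus.
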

  
\begin{proof}
  We prove this identity by performing a change of basis. 
  Since the $r$ columns of $A$ are orthonormal, we can extend this set to form a complete orthonormal basis for $\mathbb{R}^D$. 
  Let $B \in \mathbb{R}^{D \times (D-r)}$ be a matrix whose columns form an orthonormal basis for the orthogonal complement of the column space of $A$. 
  The resulting matrix $U = [A \, B] \in \R^{D \times D}$ is an orthogonal matrix, satisfying $U^\top U = UU^\top = I_D$.
  
  The condition $AA^\top S = S$ implies that the column space of $S$ is contained within that of $A$. 
  Consequently, $S$ must annihilate any vector orthogonal to the column space of $A$. 
  This means $B^\top S = 0$, which can be verified as follows:
  \begin{equation}
      B^\top S = B^\top (AA^\top S) = (B^\top A)A^\top S = 0.
  \end{equation}
  In the new basis defined by $U$, the matrix $S$ is represented by $U^\top S U$. This similarity transformation reveals a block upper-triangular structure:
  \begin{equation}
      U^\top S U = \begin{pmatrix} A^\top  B^\top \end{pmatrix} S \begin{pmatrix} A & B \end{pmatrix} = \begin{pmatrix} A^\top S A & A^\top S B \\ B^\top S A & B^\top S B \end{pmatrix} = \begin{pmatrix} A^\top SA & A^\top SB \\ 0 & 0 \end{pmatrix}. \label{eq:USU}
  \end{equation}
  Now, we express the left-hand side (LHS) of Eq.~\eqref{eq:USU} in this basis. Using the cyclic property of the trace, $\tr(X) = \tr(U^\top X U)$, we have:
  \begin{equation}
      \mathrm{LHS} = \tr\left(U^\top (AA^\top) U  U^\top (aS+bI_D)^{-l} U\right).
  \end{equation}
  The projection matrix $AA^\top$ and the term $(aS+bI_D)$ transform as:
  \begin{align}
      U^\top AA^\top U &= \begin{pmatrix} I_r & 0 \\ 0 & 0 \end{pmatrix} \\
      U^\top(aS+bI_D)U &= a(U^\top S U) + b I_D = \begin{pmatrix} aA^\top SA+bI_r & aA^\top SB \\ 0 & bI_{D-r} \end{pmatrix}.
  \end{align}
  The inverse of a block upper-triangular matrix is also block upper-triangular, and squaring it preserves this structure. We only need the diagonal blocks for the trace calculation:
  \begin{equation}
      \left(U^\top(aS+bI_D)U\right)^{-l} = \begin{pmatrix} (aA^\top SA+bI_r)^{-l} & * \\ 0 & (bI_{D-r})^{-l} \end{pmatrix},
  \end{equation}
  where $*$ denotes the off-diagonal block, which is irrelevant for the trace. Substituting these into the expression for the LHS:
  \begin{align}
      \mathrm{LHS} &= \tr \left( \begin{pmatrix} I_r & 0 \\ 0 & 0 \end{pmatrix} \begin{pmatrix} (aA^\top SA+bI_r)^{-l} & * \\ 0 & b^{-l}I_{D-r} \end{pmatrix} \right)  \\
      &= \tr  \begin{pmatrix} (aA^\top SA+bI_r)^{-l} & * \\ 0 & 0 \end{pmatrix} \\
      &= \tr \ab((aA^\top SA+bI_r)^{-l})
  \end{align}
  This is identical to the right-hand side of the proposition, which completes the proof.
\end{proof}

\vfill
\section{Replica Calculation}\label{appendix:replica}

\subsection{Results with Complete Statements}

In this subsection, we present the main results (Result~ \ref{result:decomposition}, \ref{result:asump_decomp}, \ref{result:implicit_regularization}, \ref{result:asymptotic_coefficients}, \ref{result:gen_error_simple}) with complete statements, including the expression for the generalization error, which are derived using the replica method.
We analyze a more general model that incorporates a regularization term into the loss function. The results for the unregularized model can be recovered by simply setting the regularization strength $\lambda = 0$.

\begin{result}(Complete Statement of Optimal Parameter Matrix)\label{result:complete_statement_W}
  Let the optimal parameter matrix $W^* \in \mathbb{R}^{D \times D}$ be the solution that minimizes the cost function $\mathcal{L}(W)$, defined as:
  \begin{equation}
      \mathcal{L}(W) = \frac{1}{2} \sum_{\mu=1}^M \ab( \sum_{i=1}^D\sum_{j=1}^D \ab( \frac{w_i^\mu w_j^\mu}{D} - W_{ij} ) H_{ij}^\mu)^2 + \frac{M_0 \lambda}{2} \sum_{i=1}^D\sum_{j=1}^D W_{ij}^2
  \end{equation}
  For a given fixed set of tasks $\mathcal{W}_0 = \{\vb{w}^1, \dots, \vb{w}^{M_0}\}$, there exist non-negative scalar constants $\hat{q}, \hat{m}, \hat{\chi}, \hat{\bar{q}}, \hat{\bar{\chi}}$ such that the optimal matrix $W^*$ is given by the expression:
  \begin{align}
      W^* 
      &\overset{\mathrm{d}}{=} \argmin_{W}\ab[ \frac{\lambda + \hat{\bar{q}}}{2}\tr\ab(WW^\top) +\frac{\hat{q}}{2} \tr \ab(WSW^\top) - \tr\ab(\ab(\sqrt{\hat{\chi}}T + \sqrt{\hat{\bar{\chi}}}R + \hat{m}S)W^\top) ] \\
      &\overset{\mathrm{d}}{=} \ab(\sqrt{\hat{\chi}}T + \sqrt{\hat{\bar{\chi}}}R + \hat{m}S) \ab(\hat{q}S + \ab(\lambda + \hat{\bar{q}})I_D)^{-1},
  \end{align}
  where the matrices $T, R, S \in \R^{D \times D}$ are defined as follows:
  \begin{equation}
      T = \frac{1}{M_0}\sum_{\mu=1}^{M_0} \vb{\xi}{\vb{w}^\mu}^\top, \quad R = \ab(R_{ij})_{i,j=1}^D \sim \mathcal{N}\ab(0, \frac{1}{M_0}), \quad S = \frac{1}{M_0} \sum_{\mu=1}^{M_0} \vb{w}^\mu{\vb{w}^\mu}^\top,
  \end{equation}
  with $\vb{\xi} \in \R^{D} \sim \mathcal{N}\ab(0, I_D)$.
\end{result}

\begin{result}(Parameter Derivation)
  \label{result:parameter_derivation}
  The order parameters $q, m, q_0, m_0, \bar{q}, \bar{m}$ (with auxiliary parameters $\chi$) in Result.~\ref{result:complete_statement_W} 
  and the conjugate parameters $\hat{q}, \hat{m}, \hat{\chi}, \hat{\bar{q}}, \hat{\bar{\chi}}$ in Result.~\ref{result:complete_statement_E} are given by the solution of the following system of equations:
  \begin{align}
    q &= \frac{1}{\kappa\hat{q}^2}\ab[ \kappa\hat{m}^2 \ab(1+2z+3z^2g_S(z) + z^3g_S'(z)) + \hat{\bar{\chi}}\ab(g_S(z) + zg_S'(z)) + \hat{\chi}\ab(1+2zg_S(z)+z^2g_S'(z))] \label{eq:q_eq} \\ 
    \bar{q} &= \frac{1}{\kappa\hat{q}^2} \ab[ \kappa\hat{m}^2 \ab(1+2zg_S(z) + z^2g'_S(z)) + \hat{\bar{\chi}} g'_S(z) + \hat{\chi} \ab(g_S(z) + z g'_S(z))] \label{eq:bar_q_eq} \\ 
    q_0 &= \frac{1}{\rho\kappa\hat{q}^2} \ab[ \kappa\hat{m}^2\ab(1+2zg_S(z) + z^2g'_S(z)) + \rho \hat{\bar{\chi}} g'_{\tilde{S}}(z) + \hat{\chi} \ab(g_S(z) + z g'_S(z))] \label{eq:q_0_eq} \\ 
    m &= \frac{\hat{m}}{\hat{q}} \ab(1+z + z^2g_S(z)) \label{eq:m_eq} \\ 
    \bar{m}& = \frac{\hat{m}}{\hat{q}} \ab(1+zg_S(z)) \label{eq:bar_m_eq} \\ 
    m_0 &= \frac{1}{\rho} \bar{m} \label{eq:m_0_eq} \\ 
    \chi &= \frac{1}{\kappa\hat{q}} \ab(1+zg_S(z)) \label{eq:chi_eq} \\ 
    \bar{\chi}& = \frac{1}{\kappa \hat{q}} g_S(z) \label{eq:bar_chi_eq} , 
   \end{align}
   where $z = - \frac{\hat{\bar{q}} + \lambda}{\hat{q}}$ and 
   \begin{align}
    \hat{q} &= \gamma \frac{1}{1 +\chi +\frac{1}{\alpha}\ab(1+\sigma^2)\bar{\chi}} \\
    \hat{\bar{q}} &= \gamma \frac{\frac{1}{\alpha}\ab(1+\sigma^2)}{1 +\chi +\frac{1}{\alpha}\ab(1+\sigma^2)\bar{\chi}} \\
    \hat{m} &= \gamma  \frac{1}{1 +\chi +\frac{1}{\alpha} \ab(1+\sigma^2) \bar{\chi}} \\ 
    \hat{\chi} &= \gamma  \frac{\frac{1}{\alpha}\ab(1+\sigma^2)\bar{q} + q - 2m + 1 + \sigma^2}{\ab(1 + \chi + \frac{1}{\alpha} \ab(1+\sigma^2)\bar{\chi})^2} \\
    \hat{\bar{\chi}} &= \gamma \frac{1}{\alpha}\ab(1+\sigma^2)\frac{\frac{1}{\alpha}\ab(1+\sigma^2)\bar{q} + q - 2m + 1 + \sigma^2}{\ab(1 + \chi + \frac{1}{\alpha} \ab(1+\sigma^2)\bar{\chi})^2} .
  \end{align}
  Here, we define $z = - \frac{\hat{\bar{q}} + \lambda}{\hat{q}}$ and the resolvents $g_S(z)$ and $g_{\tilde{S}}(z)$ are defined in Lemma~\ref{lemma:resolvent_S}.
\end{result}

\begin{result}(Complete Statement of Generalization Error)\label{result:complete_statement_E}
  There exist non-negative scalar constants $q, m, q_0, m_0, \bar{q}, \bar{m}$ such that the generalization error of the optimal parameter matrix $W^*$ is given by:
  \begin{align}
    \mathcal{E}_{\mathrm{TM}}&= 1 - 2m + q + \frac{\bar{q}}{\tilde{\alpha}} \\
    \mathcal{E}_{\mathrm{IDG}}&= 1 - 2{m}_0 + {q}_0 + \frac{\bar{q}}{\tilde{\alpha}} \\
    \mathcal{E}_{\mathrm{ODG}}&= 1 - 2{\bar{m}} + {\bar{q}} + \frac{\bar{q}}{\tilde{\alpha}} .
  \end{align}
\end{result}

\begin{proposition}
\label{prop:asympt}
  Under the condition that $\kappa \gamma > 1$, the order parameters in $\alpha \gg 1$ are given by:
  \begin{align}
    m &= 1 + \mathcal{O}\ab(\frac{1}{\alpha}), & \hat{m} &=  \gamma - \frac{1}{\kappa} + \mathcal{O}\ab(\frac{1}{\alpha})\\
    \bar{m} &= \min\ab(\kappa, \rho) + \mathcal{O}\ab(\frac{1}{\alpha})& \\
    m_0 &= \frac{1}{\rho} \min \ab(\kappa, \rho)+ \mathcal{O}\ab(\frac{1}{\alpha}) \\
    q &= 1 + \frac{\sigma^2}{\kappa\gamma-1} \min\ab(\kappa, \rho)+ \mathcal{O}\ab(\frac{1}{\alpha}) & \hat{q} &=  \gamma -\frac{1}{\kappa} + \mathcal{O}\ab(\frac{1}{\alpha}) \\
    \bar{q} &= \frac{\sigma^2}{1+\sigma^2} \frac{1-\min\ab(\kappa, \rho)}{\kappa\gamma-1} \alpha  + \min(\kappa, \rho)\ab[1+\frac{1-\min\ab(\kappa, \rho)}{\kappa\gamma - 1}] + \mathcal{O}\ab(\frac{1}{\alpha})& \hat{\bar{q}} &= \frac{1+\sigma^2}{\alpha}\ab(\gamma - \frac{1}{\kappa}) + \mathcal{O}\ab(\frac{1}{\alpha^2})\\
    q_0 &= \begin{cases}
      \frac{\kappa}{\rho}+\frac{\kappa(\rho-\kappa)}{\rho}\frac{1}{\kappa\gamma-1} + \frac{\rho-\kappa}{\rho(\kappa\gamma-1)}\frac{\sigma^2}{1+\sigma^2}\alpha + \mathcal{O}\ab(\frac{1}{\alpha}) & \text{for } \rho > \kappa \\
      1+\frac{\sigma^2}{\kappa\gamma-1}\frac{\kappa \rho}{\kappa - \rho} + \mathcal{O}\ab(\frac{1}{\alpha})& \text{for } \rho < \kappa 
    \end{cases} \\
    \chi &=\frac{\min\ab(\kappa, \rho)}{\kappa\gamma-1} + \mathcal{O}\ab(\frac{1}{\alpha})& \\
    \hat{\chi} &=  \ab(\gamma - \frac{1}{\kappa})\ab(\sigma^2  + \min\ab(\kappa, \rho)\frac{1+\sigma^2}{\alpha}) + \mathcal{O}\ab(\frac{1}{\alpha^2})\\
    \bar{\chi} &=\frac{1-\min\ab(\kappa,\rho)}{\kappa\gamma-1}\frac{1}{1+\sigma^2}\alpha   + \mathcal{O}\ab(1)& \\
    \hat{\bar{\chi}} &= \frac{1+\sigma^2}{\alpha} \ab(\gamma - \frac{1}{\kappa}) \ab(\sigma^2  + \min\ab(\kappa, \rho)\frac{1+\sigma^2}{\alpha}) + \mathcal{O}\ab(\frac{1}{\alpha^3}) 
  \end{align}
\end{proposition}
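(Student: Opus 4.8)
The plan is to solve the self-consistent system of Result~\ref{result:parameter_derivation} perturbatively in $1/\alpha$, working with the unregularized model $\lambda=0$ (this is the regime in which $z$ genuinely tends to $0$; for fixed $\lambda>0$ one has $z\to-\lambda/\hat q\neq 0$ and the analysis is different). The first step is to read off, directly from the conjugate equations, three identities that hold exactly at every $\alpha$: $\hat m=\hat q$ (identical formulas); $\hat{\bar q}=\frac{1+\sigma^2}{\alpha}\hat q$ and $\hat{\bar\chi}=\frac{1+\sigma^2}{\alpha}\hat\chi$ (ratios of the corresponding formulas); and $1+\chi+\frac1\alpha(1+\sigma^2)\bar\chi=\gamma/\hat q$ (rearranging the equation for $\hat q$). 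Consequently $z=-(\hat{\bar q}+\lambda)/\hat q=-(1+\sigma^2)/\alpha$, so $z\to 0^-$ as $\alpha\to\infty$; this is the fact that makes the resolvents in Result~\ref{result:parameter_derivation} singular and organizes the entire expansion.

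The second step is to expand $g_S,g_{\tilde S}$ and their first derivatives as Laurent series in $z$ about $z=0$, using the closed forms of Lemma~\ref{lemma:resolvent_S}. Because $S$ has $D-\min(r,M_0)$ exact zero eigenvalues, $g_S(z)=-\frac{1-\min(\kappa,\rho)}{z}+d_S+\mathcal{O}(z)$ and $g_S'(z)=\frac{1-\min(\kappa,\rho)}{z^2}+\mathcal{O}(1)$ with $d_S$ an explicit rational function of $\kappa,\rho$, while $g_{\tilde S}$ is finite at $0$ when $\rho<\kappa$ and singular like $-\frac{1-\kappa/\rho}{z}$ when $\rho>\kappa$; this dichotomy is the source of the two cases in the formula for $q_0$. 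I would then record the specific combinations entering \eqref{eq:q_eq}--\eqref{eq:bar_chi_eq}: $1+zg_S(z)\to\min(\kappa,\rho)$, $z^2g_S(z)\to 0$, $z^2g_S'(z)\to 1-\min(\kappa,\rho)$, $z^3g_S'(z)\to 0$, together with the cancellation $g_S(z)+zg_S'(z)=d_S+\mathcal{O}(z)$, which is why $\hat{\bar\chi}(g_S+zg_S')=\mathcal{O}(1/\alpha)$ whereas $\hat{\bar\chi}\,g_S'(z)=\mathcal{O}(\alpha)$.

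The third step exploits the triangular structure of the system. With $z=-(1+\sigma^2)/\alpha$ now explicit, \eqref{eq:chi_eq}--\eqref{eq:bar_chi_eq} express $\chi,\bar\chi$ as explicit functions of $\hat q$, and substituting into $1+\chi+\frac1\alpha(1+\sigma^2)\bar\chi=\gamma/\hat q$ collapses the problem to a single scalar equation for $\hat q$; at leading order this reads $\hat q+\frac1\kappa=\gamma$, i.e. $\hat q=\gamma-\frac1\kappa$, which is positive (and is the branch that matches the physical solution) exactly under the hypothesis $\kappa\gamma>1$. The rest then unwinds by substitution: $\hat m=\hat q$, $\hat{\bar q}=\frac{1+\sigma^2}{\alpha}\hat q$, $\chi=\frac{\min(\kappa,\rho)}{\kappa\gamma-1}+\mathcal{O}(1/\alpha)$, $\bar\chi=\frac{1-\min(\kappa,\rho)}{(\kappa\gamma-1)(1+\sigma^2)}\alpha+\mathcal{O}(1)$, and then $m,\bar m,m_0$ from \eqref{eq:m_eq}--\eqref{eq:m_0_eq}. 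The only remaining coupling is among $q,\bar q,\hat\chi$ (with $\hat{\bar\chi}$ slaved to $\hat\chi$): here the point to watch is that $\hat{\bar\chi}\,g_S'(z)$ is of order $\alpha$, so $\bar q$ is of order $\alpha$, and hence $\frac1\alpha(1+\sigma^2)\bar q$ re-enters the numerator of $\hat\chi$ at order $1$; solving this small system order by order gives $\hat\chi\to\sigma^2\hat q$, $q\to 1+\frac{\sigma^2\min(\kappa,\rho)}{\kappa\gamma-1}$, and the displayed leading term of $\bar q$. Finally $q_0$ follows from \eqref{eq:q_0_eq} by direct substitution, the case split being inherited from that of $g_{\tilde S}'(z)$.

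The main obstacle is the subleading bookkeeping, since the statement asks for more than leading orders: the $\mathcal{O}(1)$ constants in $\bar q$ and (for $\rho<\kappa$) in $q_0$, the $\mathcal{O}(1/\alpha)$ correction to $\hat\chi$ and hence to $\hat{\bar\chi}$, and the $\mathcal{O}(1/\alpha^2)$ correction to $\hat{\bar q}$. Each of these requires the next term of the resolvent Laurent series together with the next term of $\hat q$ (and of $q$ and $m$), and because these feed into one another one must verify that the hierarchy --- first $\hat q$; then $\chi,\bar\chi,\hat m$; then $m,\bar m,m_0$; then $q,\bar q,\hat\chi,\hat{\bar\chi}$; then $q_0,\hat{\bar q}$ --- stays triangular order by order, so that every correction is determined by quantities already computed and the recursion never closes on itself. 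A secondary point, which I would handle by continuity and monotonicity in $\alpha$ together with the nonnegativity of all order parameters, is to confirm that the replica saddle being expanded is indeed the branch with $z\to 0^-$ (equivalently $\hat{\bar q}\to 0$) rather than some other fixed point of the equations.
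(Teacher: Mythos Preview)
Your proposal is correct and follows essentially the same approach as the paper's own (very brief) proof: perform an asymptotic expansion of the saddle-point system of Result~\ref{result:parameter_derivation} in $1/\alpha$, using that $\hat{\bar q},\hat{\bar\chi}=\mathcal{O}(1/\alpha)$ so that $z\to 0^-$, and then solve the resulting triangular hierarchy. Your plan is in fact considerably more detailed than the paper's sketch; one small simplification you can exploit is that, with $\lambda=0$, the relation $\hat q=\gamma-1/\kappa$ holds \emph{exactly} (the $g_S$-terms cancel identically when you substitute $\chi,\bar\chi$ into $1+\chi+\frac{1+\sigma^2}{\alpha}\bar\chi=\gamma/\hat q$), which removes the need to track corrections to $\hat q$ at any order.
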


The key findings in the main paper (Results~\ref{result:decomposition}, \ref{result:asump_decomp}, \ref{result:implicit_regularization}, and \ref{result:asymptotic_coefficients}) are recovered from Result~\ref{result:complete_statement_W} and Proposition~\ref{prop:asympt}. Similarly, Result~\ref{result:gen_error_simple} is recovered from Result~\ref{result:complete_statement_E} and Proposition~\ref{prop:asympt}. In the following, we provide the detailed replica calculations to sequentially derive these foundational results: Results~\ref{result:complete_statement_W}, \ref{result:parameter_derivation}, \ref{result:complete_statement_E}, and Proposition~\ref{prop:asympt}.

\subsection{Replica System and Replicated Partition Function}
\label{sec:replica_system_and_replicated_partition_function}

In this subsection, we outline the replica formalism for evaluating the moments of the solution $W^*$, which forms the basis for the statistical characterization of the estimator in the high-dimensional limit.
To derive Result~\ref{result:complete_statement_W}, we first rely on two fundamental assumptions. The first assumption addresses the well-posedness of the statistical problem itself.
\begin{assumption}[Identifiability from Moments of a Scalar Statistic]\label{ass:identifiability_from_moments}
  Let $X$ be a random matrix taking values in $\mathbb{R}^{D \times D}$, and let $g: \mathbb{R}^{D \times D} \to \mathbb{R}$ be an arbitrary measurable function. 
  We assume that the probability law of $X$ is uniquely determined by the sequence of moments of the scalar statistic $g(X)$.
  More formally, for any two random matrices $X$ and $X'$, the condition
  \begin{equation}
  \mathbb{E}[(g(X))^n] = \mathbb{E}[(g(X'))^n] \quad \text{for all } n \in \mathbb{N}
  \end{equation}
  implies that $X$ and $X'$ are identically distributed.
  \end{assumption}
This is a technical assumption, positing that the random variable $X$ does not exhibit pathological behavior (such as that of a log-normal distribution) where its moments fail to uniquely define its distribution. Assuming that $W^*$ satisfies this property for a fixed task set $\mathcal{W}_0$, 
our goal is to compute its integer moments, $\E[g\ab(W^*)^p]$ for $p \in \mathbb{N}$.

Our starting point is to re-interpret the solution of the optimization problem, $W^*$, from a statistical mechanics perspective. For a fixed set of training instances $\mathcal{D} = \{ H^\mu \mid 1 \leq \mu \leq M \}$, we treat $W^*$ as a random variable drawn from a Gibbs-Boltzmann distribution, where the loss function $\mathcal{L}(W)$ acts as the energy function. In the zero-temperature limit ($\beta \to \infty$), this distribution concentrates on the global minimum of the loss:
\begin{equation}
  W^* = \argmin_{W}  \mathcal{L}(W)  \sim p(W \mid \mathcal{D}) =  \lim_{\beta \to \infty} \frac{\exp\ab(-\beta \mathcal{L}(W \mid \mathcal{D}))}{\int \odif{W} \exp\ab(-\beta \mathcal{L}(W \mid \mathcal{D}))} . \label{eq:boltzmann_distribution}
\end{equation}
Here, $\mathcal{L}(W \mid \mathcal{D})$ is the loss function for the model $W$ on the fixed learning instances $\mathcal{D}$, and $\odif{W} = \prod_{ij} \odif{W_{ij}}$. The normalization factor of Eq.~\eqref{eq:boltzmann_distribution} is the \textbf{partition function} $Z$ of the system:
\begin{equation}
  Z =  \int \odif{W} \exp\ab(-\beta \mathcal{L}(W \mid \mathcal{D})) .
\end{equation}
Using this notation, the $p$-th moment of $W^*$, averaged over the training data, is expressed as:
\begin{align}
  \E_{\mathcal{D}}[g\ab(W^*)^p] &= \E_{\mathcal{D}} \lim_{\beta \to \infty} \ab( \frac{1}{Z} \int \odif{W} \, g(W) \exp\ab(-\beta \mathcal{L}(W \mid \mathcal{D})) )^p \\
  &= \E_{\mathcal{D}} \lim_{\beta \to \infty} \lim_{n \to 0} Z^{n-p}  \ab( \int \odif{W} \, g(W) \exp\ab(-\beta \mathcal{L}(W \mid \mathcal{D})) )^p  .\label{eq:E_W_p_expression}
\end{align}
However, the expression in Eq.~\eqref{eq:E_W_p_expression} is difficult to average over the data distribution $\mathcal{D}$ because the data-dependent partition function $Z$ appears in the denominator.

To circumvent this difficulty, we employ the replica trick. The key insight is to represent the problematic term $Z^{n'-p}$ as an integral over $n'-p$ independent copies (or `replicas') of the system, which is valid for any integer $n' > p$:
\begin{align}
  Z^{n'-p} = \int \odif{W}_{n'-p} \,  \exp\ab(-\beta \sum_{a=1}^{n'-p} \mathcal{L}(W^a \mid \mathcal{D})) . \label{eq:Z_n_p_relation}
\end{align}
Here, $\odif{W_s} = \odif{W^1} \odif{W^2} \cdots \odif{W^{s}}$. This identity forms the basis for our second key assumption, which involves analytically continuing this expression from the domain of integers $n'$ to the limit $n' \to 0$.

\begin{assumption}[Analytic Continuation of the Partition Function]\label{ass:analytic_continuation_of_partition_function}  
  The identity in Eq.~\eqref{eq:Z_n_p_relation}, which is guaranteed to hold for integers $n' > p$, is assumed to remain valid in the limit $n' \to 0$. That is, for any integer $p$,
  \begin{equation}
     \lim_{n' \to 0} Z^{n'-p} = \lim_{n' \to 0} \int \odif{W}_{n'-p} \,  \exp\ab(-\beta \sum_{a=1}^{n'-p} \mathcal{L}(W^a \mid \mathcal{D})) . \label{eq:Z_n_p_relation_2}
  \end{equation}
\end{assumption}
While not mathematically rigorous for all systems, this assumption is standard practice in the replica method. With this assumption, Eq.~\eqref{eq:E_W_p_expression} can be rewritten by combining all terms into a single expectation over $n$ replicas:
\begin{align}
  \E_{\mathcal{D}}[g\ab(W^*)^p] &= \E_{\mathcal{D}} \lim_{\beta \to \infty} \lim_{n \to 0} Z^{n-p}  \ab( \int \odif{W} \, g(W) \exp\ab(-\beta \mathcal{L}(W \mid \mathcal{D})) )^p \\
  &= \E_{\mathcal{D}} \lim_{\beta \to \infty} \lim_{n \to 0}  Z^{n-p} \int \odif{W}_{p} \,  \ab(\prod_{a=1}^{p} g(W^a))   \exp\ab(-\beta \sum_{a=1}^{p} \mathcal{L}(W^a \mid \mathcal{D})) \\
  &=\lim_{\beta \to \infty} \lim_{n \to 0} \E_{\mathcal{D}}  \int \odif{W}_{n} \, \ab(\prod_{a=1}^{p} g(W^a))  \exp\ab(-\beta \sum_{a=1}^{n} \mathcal{L}(W^a \mid \mathcal{D})) \\ 
  &=  \lim_{n \to 0} \frac{\lim_{\beta \to \infty} \E_{\mathcal{D}}  \int \odif{W}_{n} \, \ab(\prod_{a=1}^{p} g(W^a))  \exp\ab(-\beta \sum_{a=1}^{n} \mathcal{L}(W^a \mid \mathcal{D})) }{\lim_{\beta \to \infty} \E_{\mathcal{D}} \ab[Z^n]} \\
  &=\lim_{n \to 0} \frac{ \lim_{\beta \to \infty} \E_{\mathcal{D}}  \int \odif{W}_{n} \, \ab(\prod_{a=1}^{p} g(W^a))  \exp\ab(-\beta \sum_{a=1}^{n} \mathcal{L}(W^a \mid \mathcal{D})) }{\lim_{\beta \to \infty} \E_{\mathcal{D}}  \int \odif{W}_{n} \,   \exp\ab(-\beta \sum_{a=1}^{n} \mathcal{L}(W^a \mid \mathcal{D}))} .\label{eq:E_W_p_expression_2}
\end{align}
The final expression in Eq.~\eqref{eq:E_W_p_expression_2} is the central result of this formalism. 
It shows that the desired moment, $\E_{\mathcal{D}}[g\ab(W^*)^p]$, can be interpreted as a $p$-point correlation function of the replicated variables $\{W^1, \dots, W^n\}$ drawn from an effective probability distribution:
\begin{align}
  p(W^1, \dots, W^n) \propto \lim_{\beta \to \infty} \E_{\mathcal{D}} \exp\ab(-\beta \sum_{a=1}^{n} \mathcal{L}(W^a \mid \mathcal{D})) ,\label{eq:p_W_n} 
\end{align}
in the $n\to 0$ limit.
The main advantage of this approach is that the difficult average over the data distribution $\mathcal{D}$ has been absorbed into the structure of a new joint distribution over $n$ replicated systems. We call this new system the \textbf{replica system}, and its partition function, $\E_{\mathcal{D}} \ab[Z^n]$, the data-averaged \textbf{replicated partition function}.

In the subsequent sections, we will analyze this replicated system. By calculating the replicated partition function, we will derive the correlations between replicas, which in turn will allow us to derive Result~\ref{result:complete_statement_W}.

\subsection{Analysis of the Data-Averaged Replicated Partition Function}

In this subsection, we analyze the data-averaged replicated partition function to derive the statistical properties of the optimal solution. Our immediate goal is to compute the partition function of the probability distribution Eq.~\eqref{eq:p_W_n}, which is the data (disorder) average of the replicated partition function:
\begin{equation}
  \E [Z^n] =  \E \int \odif{W}_n \, \exp \ab[  - \frac{\beta}{2} \sum_{\mu=1}^M\sum_{a=1}^n \ab( \sum_{ij} \ab( \frac{w_i^\mu w_j^\mu}{D} - W_{ij}^a ) H_{ij}^\mu + \epsilon^\mu)^2 - \frac{D\lambda\beta}{2} \sum_{ij}\sum_a \ab(W_{ij}^a)^2 ] .
\end{equation}

The overall disorder average $\E[\cdots]$ here is performed over three hierarchical stages of data generation. Each stage introduces a different source of randomness:
\begin{itemize}
    \item \textbf{$\E_{\mathcal{W}_0}$}: This denotes the average over the generation of the base pool of tasks, $\mathcal{W}_0 = \{\vb{w}^1, \dots, \vb{w}^{M_0}\}$. This expectation accounts for the randomness in the underlying low-dimensional structure from which all tasks are derived (i.e., the shared matrix $A$ and latent vectors $\{\vb{v}^\mu\}$).
    \item \textbf{$\E_{\mathcal{W} \mid \mathcal{W}_0}$}: This denotes the average over the sampling of the final training set $\mathcal{W}$. Specifically, it is the average over the process of drawing $M$ tasks uniformly and with replacement from the fixed base pool $\mathcal{W}_0$.
    \item \textbf{$\E_{H^\mu \mid \vb{w}^\mu}$}: This denotes the average over the generation of the data within a single training instance for a given task $\vb{w}^\mu$. This ``in-context" randomness comes from sampling the $L+1$ input vectors $\{\vb{x}_l^\mu\}$ and the corresponding label noise $\{\epsilon_l^\mu\}$.
\end{itemize}

Using the above notation, we can isolate the average over the data distribution ($H^\mu$ and noise) for a fixed task $\vb{w}^\mu$ by defining the function $\phi(\vb{W}, \vb{w}^\mu)$:
\begin{align}
 \phi(\vb{W}, \vb{w}^\mu) = \E_{H^\mu \mid \vb{w}^\mu} \exp \ab[-\frac{\beta}{2} \sum_a \ab(\sum_{ij} \ab(\frac{w^\mu_i w^\mu_j}{D} - W_{ij}^a) H^\mu_{ij} + \epsilon^\mu)^2 ].
\end{align}
With this definition, the replicated partition function becomes:
\begin{align}
 \E [Z^n] &= \int \odif{W}_n \,\exp\ab(- \frac{D\lambda\beta}{2} \sum_a \sum_{ij} \ab(W_{ij}^a)^2) \, \E_{\mathcal{W}_0} \E_{\{\vb{w}^\mu\}\mid {\cal W}_0} \prod_{\mu =1}^M \phi(\vb{W}, \vb{w}^\mu) \\
 &= \int \odif{W}_n \,\exp\ab(- \frac{D\lambda\beta}{2} \sum_a \sum_{ij} \ab(W_{ij}^a)^2)  \, \prod_{\mu=1}^{M_0} \E_{\mathcal{W}_0} \ab[\phi (\vb{W}, \vb{w}^\mu)^{\frac{M}{M_0}} ]  \\
 &= \E_{\mathcal{W}_0} \int \odif{W}_n\,\exp\ab(- \frac{D\lambda\beta}{2} \sum_a \sum_{ij} \ab(W_{ij}^a)^2)  \, \exp \ab[\frac{M}{M_0}  \sum_{\mu=1}^{M_0}  \log \phi (\vb{W}, \vb{w}^\mu)] .
\end{align}
In the last step, we assume self-averaging with respect to the tasks $\vb{w}^\mu$, allowing us to replace the sum over logarithms with its average value, which simplifies the expression to depend on the average of $\log\phi$:
\begin{align}
 \E [Z^n] &= \E_{\mathcal{W}_0} \int \odif{W}_n\,\exp\ab(- \frac{D\lambda\beta}{2} \sum_a \sum_{ij} \ab(W_{ij}^a)^2)  \, \exp \ab[M   \log \phi (\vb{W}, \bar{Q}, m)] .
\end{align}

To proceed, we introduce the following order parameters, which describe the macroscopic state of the system. 
We define the following order parameters:
\begin{align}
 Q^{ab} &= \frac{1}{M_0} \sum_{\mu=1}^{M_0}\frac{1}{D} ({W}^a\vb{w}^\mu)^\top ({W}^b \vb{w}^\mu) \\
 \bar{Q}^{ab} &= \frac{1}{D} \tr \left(({W}^a)^\top {W}^b\right) \\
 m^a &= \frac{1}{M_0} \sum_{\mu=1}^{M_0}\frac{1}{D} {\vb{w}^\mu}^\top {W}^a \vb{w}^\mu,
\end{align}
where the index $a,b \in \{1, ..., n\}$ denotes the replica index.
$Q^{ab}$ measures the task-conditioned overlap between replicas, $\bar{Q}^{ab}$ measures the direct overlap of the weight matrices, and $m^a$ measures the alignment of a replica's weights to the task structure.

We now re-evaluate the data-averaged term $\phi$ by performing the Gaussian integral. We define the vectors
\begin{align}
 \vb{v}^a = \pmqty{\mathsf{vec}\ab(\frac{\vb{w}^\mu {\vb{w}^\mu}^\top}{D} - W^a) \\ 1}, \qquad \vb{h} = \pmqty{\mathsf{vec}(H^\mu) \\ \epsilon^\mu},
\end{align}
which allows us to write the exponent as a quadratic form. The average over the zero-mean Gaussian vector $\vb{h}$ yields:
\begin{align} 
 \phi(\vb{W}, \vb{w}^\mu) &= \E_{H^\mu \mid \vb{w}^\mu} \exp\ab[-\frac{\beta}{2}  \sum_a \ab(\sum_{ij} \ab(\frac{w^\mu_i w^\mu_j}{D} - W_{ij}^a) H_{ij}^\mu + \epsilon^\mu)^2 ] \\
 &=  \E_{H^\mu \mid \vb{w}^\mu} \exp\ab(-\frac{\beta}{2}  \sum_a \ab(\vb{v}^a\cdot \vb{h})^2 ) \\
 &= \ab[\det \ab(I_{D^2 + 1} + 2\beta C^H V V^\top)]^{-\frac{1}{2}} \\
 &= \ab[\det \ab(I_{n} + 2\beta V^\top C^H V)]^{-\frac{1}{2}} ,
\end{align}
where $V \in \R^{(D^2 + 1) \times n}$ is the matrix whose columns are $\vb{v}^a$, and $C^H$ is the covariance matrix of $\vb{h}$. The entries of the matrix $V^\top C^H V$ can be computed as:
\begin{align}
 \ab(V^\top C^H V)_{ab} &= \sum_{(ij), (kl)}^{D^2}  V_{(ij),a} V_{(kl),b} C^H_{(ij)(kl)} + \sigma^2  \\
 &= \sum_{(ij), (kl)}^{D^2}  \ab(\frac{w^\mu_i w^\mu_j}{D} - W_{ij}^a) \ab(\frac{w^\mu_k w^\mu_l}{D} - W_{kl}^b)  \frac{\delta_{ik}}{D} \ab(\frac{D}{L}\delta_{jl}(1 + \sigma^2) + w_j w_l ) + \sigma^2 \\
 &= \frac{D}{L}(1+\sigma^2) \bar{Q}^{ab} + \ab(Q^{ab} - m^a - m^b + 1 ) + \sigma^2 \label{eq:C_H} .
\end{align}
Defining $\mathcal{M} \in \R^{n \times n}$ as the matrix whose elements are given by the expression above, we finally obtain:
\begin{align}
 \phi(\vb{W}, \vb{w}^\mu) &= \ab[\det \ab(I_n + \beta \mathcal{M})]^{-\frac{1}{2}} \\
 &= \E_{\vb{u}} \exp\ab[-\frac{\beta}{2}\sum_a \ab(u^a)^2]\\
 &= \exp\ab[\frac{1}{2}\mathcal{M}^{ab}\pdv{}{h^a h^b}]\exp\ab(-\frac{\beta}{2}\sum_a \ab(h^a)^2) , 
\end{align}
where $\vb{u} = (u^1, \cdots, u^n)^\top \in \R^n$ is a Gaussian random vector with zero mean and covariance given by
\begin{align}
 \mathcal{M}^{ab} &= \frac{D}{L}(1+\sigma^2)\bar{Q}^{ab} + \ab(Q^{ab} - m^a - m^b + 1 ) + \sigma^2 . \label{eq:def_of_mathcal_M}
\end{align}

We enforce the definitions of the order parameters by introducing their integral representations using the Dirac delta function:
\begin{align}
 1 &= \int \odif{Q}^{ab} \, \delta \ab(DM_0 Q^{ab} -  \sum_{\mu=1}^{M_0}  \ab(W^a \vb{w}^\mu)^\top \ab(W^b \vb{w}^\mu)) \\
 &= \int \odif{Q}^{ab} \odif{\hat{Q}^{ab}} \, \exp\ab[ - \frac{\hat{Q}^{ab}}{2} \ab(DM_0 Q^{ab} -  \sum_{\mu=1}^{M_0}  \ab(W^a \vb{w}^\mu)^\top \ab(W^b \vb{w}^\mu)) ] \\
 1 &= \int \odif{\bar{Q}^{ab}} \, \delta \ab(DM_0 \bar{Q}^{ab} - M_0 \tr\ab( \ab(W^a)^\top W^b)) \\
 &= \int \odif{\hat{\bar{Q}}^{ab}} \odif{\hat{\bar{Q}}^{ab}} \, \exp\ab[ - \frac{\hat{\bar{Q}}^{ab}}{2} \ab(DM_0 \bar{Q}^{ab} - M_0 \tr\ab( \ab(W^a)^\top W^b)) ] \\
 1 &= \int \odif{m^a} \, \delta \ab(DM_0 m^a - \sum_{\mu=1}^{M_0} \ab(\vb{w}^\mu)^\top W^a \vb{w}^\mu ) \\
 &= \int \odif{\hat{m}^a} \odif{m^a} \, \exp\ab[ - \frac{\hat{m}^a}{2} \ab(DM_0 m^a - \sum_{\mu=1}^{M_0} \ab(\vb{w}^\mu)^\top W^a \vb{w}^\mu ) ] ,
\end{align}
where the conjugate parameters $\{\hat{Q}^{ab}, \hat{\bar{Q}}^{ab}\}_{ab}, \{\hat{m}^a\}_{a}$ are introduced\footnote{The integration over the conjugate parameters is performed along the imaginary axis to ensure the positivity of the parameters.}. 
This allows us to express the averaged replicated partition function as a product of three terms:
\begin{align}
 \E_{} [Z^n] &=\int \prod_{ab} \ab(\odif{Q}^{ab} \odif{\hat{Q}^{ab}} \odif{\bar{Q}^{ab}} \odif{\hat{\bar{Q}}^{ab}}) \prod_{a} \ab(\odif{m^a} \odif{\hat{m}^a})  \, G_I G_E G_S , \label{eq:G_I_G_E_G_S}
\end{align}
where
\begin{align}
 G_I &= \exp\ab(-\frac{DM_0}{2}\sum_{ab}Q^{ab}\hat{Q}^{ab} - \frac{D M_0}{2}\sum_{ab} \bar{Q}^{ab}\hat{\bar{Q}}^{ab} - DM_0\sum_{a} m^a \hat{m}^a) \label{eq:def_G_I}\\
 G_S &= \E_{{\cal W}_0} \int \odif{W^n} \exp\left[ \frac{1}{2} \sum_\mu \sum_{ab} \hat{Q}^{ab} \left( (W^a \vb{w}^\mu)^\top (W^b \vb{w}^\mu) \right) + \frac{M_0}{2} \sum_{ab} \hat{\bar{Q}}^{ab} \tr\left( {W^a}^\top W^b \right) \right. \notag \\
 &\qquad \left. + \sum_\mu \sum_a \hat{m}^a {\vb{w}^\mu}^\top W^a \vb{w}^\mu - \frac{\lambda\beta M_0}{2} \sum_{a} \tr\left( {W^a}^\top W^a \right) \right] \label{eq:def_G_S}\\
 G_E &=  \exp\ab[M  \log  \phi(Q, \bar{Q}, m) ] \label{eq:def_G_E}.
\end{align}

\subsection{Replica-Symmetric (RS) Ansatz}

In this subsection, we now apply the replica-symmetric (RS) ansatz, which is a foundational step in simplifying the replicated system. 
The core idea is to assume that all $n$ replicas are statistically equivalent and interchangeable. 
It is important to note that the validity of the RS solution is not mathematically guaranteed for all complex systems; its correctness has only been rigorously proven for a limited class of models. 
However, particularly for convex optimization problems such as the one considered in this work, the replica method under the Assumption.~\ref{ass:analytic_continuation_of_partition_function}, \ref{ass:identifiability_from_moments}, and \ref{ass:replica_symmetry} has an extensive and successful track record. To date, there are no known instances where the method, when applied to such problems under standard supporting assumptions, has led to physically inconsistent or contradictory results. 
Therefore, it serves as a crucial and often surprisingly accurate starting point in the analysis of disordered systems.

\begin{assumption}[Replica Symmetry]\label{ass:replica_symmetry}
The RS ansatz parameterizes the order parameter matrices, which depend on pairs of replica indices $(a, b)$, in terms of a small number of macroscopic variables. This symmetry is expressed as:
\begin{align}
 Q^{ab} &= \frac{\chi}{\beta} \delta_{ab} + q,  \\
 \bar{Q}^{ab} &= \frac{\bar{\chi}}{\beta} \delta_{ab} + \bar{q}, \\
 \hat{Q}^{ab} &= -\beta \hat{q} \delta_{ab} + \beta^2 \hat{\chi}, \\
 \hat{\bar{Q}}^{ab} &= - \beta \hat{\bar{q}} \delta_{ab} + \beta^2 \hat{\bar{\chi}} \\
 m^a &= m \\
 \hat{m}^a &= \beta \hat{m}.
\end{align}
\end{assumption}

This assumption is essential to the replica method for two primary reasons. 
First, it drastically simplifies the problem by positing that the complex interactions between $n(n-1)/2$ pairs of replicas can be described by a small, fixed number of macroscopic order parameters. 
This reduces a problem with a large number of degrees of freedom to one of solving a few self-consistent equations. Second, the symmetric structure imposed by the ansatz is a necessary technical step to analytically continue the expressions from integer $n$ to the $n \to 0$ limit, which is the core of the replica trick. 
Without this simplification, the combinatorial structure of the replica indices would prevent a well-defined continuation.

Upon substituting the RS ansatz into the expression for the replicated partition function, the average over the data distribution $\E\left[Z^n\right]$ (Eq.~\eqref{eq:G_I_G_E_G_S}) can be expressed as an integral over the relevant order parameters. 
Let the set of these parameters be denoted by $\vb{x} = \{q, \bar{q}, m, \chi, \bar{\chi}, \hat{q}, \hat{\bar{q}}, \hat{m}\}$. The resulting expression takes the form:
\begin{align}
  \E\left[Z^n\right] = \int \left(\prod_{i \in \vb{x}} \mathrm{d}x_i\right) \exp\left[-\beta D M_0 f(\vb{x}) + \mathcal{O}(1) \right]
  \label{eq:replicated_Z_integral}
\end{align}
where $f(\vb{x})$ is a function of the order parameters.

In the asymptotic limit where $D \to \infty$, this integral can be evaluated using the saddle-point method (also known as the method of steepest descent). 
Consequently, the values of the order parameters are determined by finding the specific configuration $\vb{x}^*$ that extremizes the function $f(\vb{x})$. 
This procedure yields the saddle-point equations, which require the partial derivative of $f$ with respect to each order parameter $x_i$ to be zero:
\begin{equation}
    \frac{\partial f}{\partial x_i} = 0, \quad \text{for all } x_i \in \vb{x}. \label{eq:saddle_point_equations}
\end{equation}
The solution to this system of equations determines the macroscopic state of the system, providing the values of the order parameters that characterize its typical behavior in the asymptotic limit.

\subsection{Statistics of the solution $W^*$ (derivation of the Result.~\ref{result:complete_statement_W})}
In this subsection, we derive an equivalent and interpretable expression that characterizes the statistics of $W^*$.
Substituting the RS ansatz into the expression for $G_S$ (Eq.~\eqref{eq:def_G_S}) gives:
\begin{align}
 G_S &= \E_{{\cal W}_0} \int \odif{W_n} \exp\left[ \frac{1}{2} \sum_\mu \sum_{ab} \hat{Q}^{ab} \left( (W^a \vb{w}^\mu)^\top (W^b \vb{w}^\mu) \right) + \frac{M_0}{2} \sum_{ab} \hat{\bar{Q}}^{ab} \tr\left( {W^a}^\top W^b \right) \right. \notag \\
 &\qquad \left. + \sum_\mu \sum_a \hat{m}^a {\vb{w}^\mu}^\top W^a \vb{w}^\mu - \frac{\lambda\beta M_0}{2} \sum_{a} \tr\left( {W^a}^\top W^a \right) \right] \\
 &= \E_{{\cal W}_0} \int \odif{W_n} \exp\left[ -\frac{\beta\hat{q}}{2} \sum_\mu \sum_a  (W^a \vb{w}^\mu)^\top (W^a \vb{w}^\mu) + \frac{\beta^2\hat{\chi}}{2} \sum_\mu \sum_a  (W^a \vb{w}^\mu)^\top \sum_b (W^b \vb{w}^\mu) \right. \notag \\
 &\qquad - \frac{M_0\beta\hat{\bar{q}}}{2} \sum_a \tr\left( {W^a}^\top W^a \right)  + \frac{M_0 \beta^2 \hat{\bar{\chi}} }{2}   \tr\left( \sum_a {W^a}^\top \sum_b W^b \right)  \notag \\
 &\qquad \left. + \beta   \hat{m} \sum_\mu \sum_a {\vb{w}^\mu}^\top W^a \vb{w}^\mu - \frac{\lambda\beta M_0}{2} \sum_{a} \tr\left( {W^a}^\top W^a \right) \right] .
\end{align}
This allows us to decouple the replicas through the following Hubbard-Stratonovich transformation:
\begin{align}
  \exp\ab[\frac{\beta^2\hat{\chi}}{2} \sum_\mu \sum_a  (W^a \vb{w}^\mu)^\top \sum_b (W^b \vb{w}^\mu)] &= \int \Odif{\vb{\xi}_{M_0}} \exp\ab[\beta \sqrt{\hat{\chi}} \sum_\mu \sum_a  (W^a \vb{w}^\mu)^\top \vb{\xi}^\mu] \\
  \exp\ab[\frac{M_0\beta^2\hat{\bar{\chi}}}{2} \tr \ab(\sum_a {W^a}^\top \sum_b W^b)] &= \int \Odif{\Lambda} \exp\ab[\beta \sqrt{M_0\hat{\bar{\chi}}}  \tr \ab({W^a}^\top \Lambda)] ,
\end{align}
where $\vb{\xi} = (\vb{\xi}^\mu \in \mathbb{R}^D)_{1 \le \mu \le M_0}$, $\Lambda \in \mathbb{R}^{D \times D}$, and $\Odif{\vb{\xi}_{M_0}} $ and $\Odif{\Lambda}$ are the standard Gaussian measures defined as 
$\Odif{\vb{\xi}_{M_0}} = \prod_{\mu=1}^{M_0} \exp\ab(-{\vb{\xi}^\mu}^\top \vb{\xi}^\mu/2) / \sqrt{2\pi}^D \odif{\vb{\xi}^\mu}$, $\Odif{\Lambda} = \prod_{ij} \exp\ab(-\Lambda_{ij}^2/2) / \sqrt{2\pi} \odif{\Lambda_{ij}}$.
Using these transformations, we can rewrite the expression for $G_S$ as:
\begin{align}
 G_S &= \E_{{\cal W}_0} \int \odif{W_n} \Odif{\vb{\xi}} \Odif{\Lambda} \, \exp\left[ -\beta \sum_a \left[\frac{1}{2}\tr\ab(W^a\ab(\hat{q}\sum_\mu \vb{w}^\mu {\vb{w}^\mu}^\top + (\lambda + \hat{\bar{q}} )M_0 I_D){W^a}^\top)  \right. \right. \notag \\
 &\qquad \left. \left. -\sum_\mu \ab(W^a\vb{w}^\mu)^\top \ab(\sqrt{\hat{\chi}}\vb{\xi}^\mu + \hat{m}\vb{w}^\mu) - \sqrt{M_0}\sqrt{\hat{\bar{\chi}}}\tr \ab(W^a \Lambda)  \right] \right] \\
 &= \E_{{\cal W}_0} \int \odif{W_n} \Odif{\vb{\xi}} \Odif{\Lambda} \,  \exp\ab[-\frac{\beta}{2} \sum_a f(W^a) ] , \label{eq:G_S_expression}
\end{align}
where we have defined
\begin{align}
 f(W) = \tr\ab[W\ab(\hat{q} \sum_\mu\vb{w}^\mu{\vb{w}^\mu}^\top + (\lambda+\hat{\bar{q}}) M_0 I_D)W^\top] - 2\sum_\mu \ab(\sqrt{\hat{\chi}}\vb{\xi}^\mu+\hat{m}\vb{w}^\mu)^\top W\vb{w}^\mu - 2\sqrt{M_0}\sqrt{\hat{\bar{\chi}}} \tr\ab(W \Lambda^\top) .
\end{align}

By extremizing the exponent of the integrand with respect to the order parameters, we obtain the saddle-point equations:
\begin{align}
 q &= \frac{1}{M_0} \E_{\mathcal{W}_0} \E_{\vb{\xi}, \Lambda} \ab[ \frac{1}{D}  \sum_{\mu=1}^{M_0}\ab(\hat{W} \vb{w}^{\mu})^\top \ab(\hat{W} \vb{w}^{\mu}) ] \label{eq:saddle_point_q} \\
 \bar{q} &= \frac{1}{D} \E_{\mathcal{W}_0} \E_{\vb{\xi}, \Lambda} \tr \ab({\hat{W}}^\top \hat{W}) \label{eq:saddle_point_bar_q} \\
 m &= \frac{1}{M_0} \E_{\mathcal{W}_0} \E_{\vb{\xi}, \Lambda} \ab[ \frac{1}{D}  \sum_{\mu=1}^{M_0}{\vb{w}^{\mu}}^\top \hat{W} \vb{w}^{\mu} ] \label{eq:saddle_point_m} \\
 \chi &= \frac{1}{M_0} \E_{\mathcal{W}_0} \E_{\vb{\xi}, \Lambda} \ab[  \sum_{\mu=1}^{M_0}   \ab({\vb{w}^{\mu}}^\top \ab(\hat{q}\sum_{\nu=1}^{M_0} \vb{w}^{\nu}{\vb{w}^{\nu}}^\top + (\lambda+\hat{\bar{q}}) M_0 I_D)^{-1} \vb{w}^{\mu})] \label{eq:saddle_point_chi} \\
 \bar{\chi} &=  \E_{\mathcal{W}_0} \E_{\vb{\xi}, \Lambda} \tr \ab[\ab(\hat{q}\sum_{\mu=1}^{M_0} \vb{w}^{\mu}{\vb{w}^{\mu}}^\top + (\lambda+\hat{\bar{q}}) M_0 I_D)^{-1}] \label{eq:saddle_point_bar_chi}.
\end{align}
Here, the optimal weight matrix $\hat{W}$ that minimizes the effective problem $f(W)$ is given by:
\begin{align}
 \hat{W} &= \argmin_{W} f(W) \\
 &= \argmin_{W}\ab[ \frac{\lambda + \hat{\bar{q}}}{2}\tr\ab(WW^\top) +\frac{\hat{q}}{2} \tr \ab(WSW^\top) - \tr\ab(\ab(\sqrt{\hat{\chi}}T + \sqrt{\hat{\bar{\chi}}}R + \hat{m}S)W^\top) ] \label{eq:effective_problem}\\
 &=\ab(\sqrt{\hat{\chi}}T + \sqrt{\hat{\bar{\chi}}}R + \hat{m}S)  \ab(\hat{q}S + \ab(\lambda + \hat{\bar{q}})I_D)^{-1} \label{eq:effective_problem_solution}, 
\end{align}
where we have defined the following matrices:
\begin{align}
 T = \frac{1}{M_0}\sum_{\mu=1}^{M_0} \vb{\xi}^\mu{\vb{w}^\mu}^\top, \quad R = \ab(R_{ij})_{i,j=1}^D \sim \mathcal{N}\ab(0, \frac{1}{M_0}), \quad S = \frac{1}{M_0} \sum_{\mu=1}^{M_0} \vb{w}^\mu{\vb{w}^\mu}^\top, 
\end{align}

The crucial insight from Eq.~\eqref{eq:G_S_expression} is that the expression for $G_S$ completely decouples over the replica indices $a$. 
This implies that the joint probability distribution for the $n$-replica system factorizes into a product of identical and independent distributions for each replica $W^a$. 

Our original goal is to compute the moments of a statistic $g(W^*)$, where $W^*$ is the solution to the original optimization problem. 
Within the replica framework, as seen in Section~\ref{sec:replica_system_and_replicated_partition_function}, the $p$-th moment $\E_{\mathcal{D}}[g(W^*)^p]$ corresponds to the expectation of a $p$-body correlation, $\E[g(W^1)g(W^2)\cdots g(W^p)]$, in the $n \to 0$ limit. 
Due to the factorization, this simplifies significantly:
\begin{align}
  \E_{\mathcal{D}}\left[g(W^*)^p\right] &=   \E_{\{W^a\}} \left[g(W^1)g(W^2)\cdots g(W^p)\right] \\
  &= \E_{\mathcal{W}_0} \int \Odif{\vb{\xi}} \Odif{\Lambda} \, \left[\lim_{\beta \to \infty} \frac{\int \odif{W} \, g(W)\exp\left[-\frac{\beta}{2}  f(W) \right]}{\int \odif{W} \, \exp\left[-\frac{\beta}{2} f(W) \right]} \right]^p \\
  &= \E_{\mathcal{W}_0, \vb{\xi}, \Lambda} \left[ g(\hat{W})^p \right]. \label{eq:moment_decomposition}
\end{align}
We now invoke our technical assumption, Assumption~\ref{ass:identifiability_from_moments}, which posits that if the moments of a scalar statistic of two random matrices are identical, then the matrices themselves are identically distributed. This directly leads to our main result:
\begin{equation}
  W^* \overset{\mathsf{d}}{=} \hat{W}.
\end{equation}
This establishes that the solution to the original, complex optimization problem, $W^*$, is distributionally equivalent to $\hat{W}$, the solution of the much simpler effective quadratic problem defined by $f(W)$. 
This completes the derivation of Result~\ref{result:complete_statement_W}.

\subsubsection{Interpretation of the Auxiliary Fields}
It is instructive to provide a physical interpretation of the auxiliary fields, $\vb{\xi}$ and $\Lambda$ in Eq.~\eqref{eq:G_S_expression}. 
The Hubbard-Stratonovich transformation replaced the difficult average over the data distribution with an average over simpler, 
data-independent Gaussian fields $\vb{\xi}$ and $\Lambda$. 
Eq.~\eqref{eq:moment_decomposition} shows that the auxiliary field $\vb{\xi}$ and $\Lambda$ can be interpreted as an effective random field that embodies the statistical fluctuations of the training data. 
The solution $\hat{W}$ is then understood as the optimal response to a particular realization of these effective data fluctuations.

\subsection{Simplified Expressions for the Saddle-Point Equations (derivation of the Result.~\ref{result:parameter_derivation})}
In this subsection, we derive a set of equations that determine the order parameters and their conjugate variables. 
By simplifying the saddle-point equations under the RS assumption, we obtain expressions suitable for numerical computation.

\subsubsection{Order Parameters}\label{sec:order_parameters}

Next, we derive the simplified expressions for the saddle-point equations in Eqs.~\eqref{eq:saddle_point_q}-\eqref{eq:saddle_point_bar_chi}.
First, we define $\mathcal{M} = \hat{q}S + (\lambda+\hat{\bar{q}})I_D$.
Noting that $\mathcal{M}^{-1}$ is commuting with $S$, $T^\top T = D/ M_0 S$, $AA^\top S = S$, and Lemma~\ref{lemma:explicit_expressions_for_matrix_moments},
we have the following relations:

\begin{align}
  q &= \frac{1}{D}\E\ab[\tr\ab(S{W^*}^\top W^*)] \\
  &= \frac{1}{D} \E\ab[\tr\ab(\hat{\chi}S\mathcal{M}^{-1}T^\top T \mathcal{M}^{-1} + \hat{\bar{\chi}}S\mathcal{M}^{-1}R^\top R \mathcal{M}^{-1} + \hat{m}^2 S\mathcal{M}^{-1}S^2 \mathcal{M}^{-1})] \\
  &= \frac{1}{M_0} \ab[ \hat{\chi} \E\ab[\tr\ab(S^2\mathcal{M}^{-2})] +\hat{\bar{\chi}} \E\ab[\tr\ab(S\mathcal{M}^{-2})] + \kappa \hat{m}^2 \E\ab[\tr\ab(S^3 \mathcal{M}^{-2})] ] \\
  &= \frac{1}{\kappa\hat{q}^2}\ab[\hat{\chi}\ab(1+2zg_S(z)+z^2g_S'(z))  + \hat{\bar{\chi}}\ab(g_S(z) + zg_S'(z)) + \kappa\hat{m}^2 \ab(1+2z+3z^2g_S(z) + z^3g_S'(z)) ]  
\end{align}
\begin{align}
  \bar{q} &= \frac{1}{D}\E\ab[\tr\ab({W^*}^\top W^*)] \\
  &= \frac{1}{D} \E\ab[\tr\ab(\hat{\chi}\mathcal{M}^{-1}T^\top T \mathcal{M}^{-1} + \hat{\bar{\chi}}\mathcal{M}^{-1}R^\top R \mathcal{M}^{-1} + \hat{m}^2 \mathcal{M}^{-1}S^2 \mathcal{M}^{-1})] \\
  &= \frac{1}{M_0} \ab[ \hat{\chi} \E\ab[\tr\ab(S\mathcal{M}^{-2})] +\hat{\bar{\chi}} \E\ab[\tr\ab(\mathcal{M}^{-2})] + \kappa \hat{m}^2\E\ab[\tr\ab(S^2 \mathcal{M}^{-2})] ] \\
  &= \frac{1}{\kappa\hat{q}^2} \ab[ \hat{\chi} \ab(g_S(z) + z g'_S(z)) + \hat{\bar{\chi}} g'_S(z) +  \kappa\hat{m}^2 \ab(1+2zg_S(z) + z^2g'_S(z))] 
\end{align}

\begin{equation}
  m = \frac{1}{D}\E\ab[\tr\ab(S{W^*})] = \frac{\hat{m}}{D}\E\ab[\tr\ab(S^2\mathcal{M}^{-1})] = \frac{\hat{m}}{\hat{q}} \ab(1 + z + z^2g_S(z))
\end{equation}

\begin{equation}
  \bar{m} = \frac{1}{D}\E\ab[\tr\ab(W^*)] = \frac{\hat{m}}{D}\E\ab[\tr\ab(S\mathcal{M}^{-1})] = \frac{\hat{m}}{\hat{q}} \ab(1 + z g_S(z))
\end{equation}
\begin{equation}
  \chi = \frac{1}{M_0}\E\ab[\tr\ab(S\mathcal{M}^{-1})] = \frac{1}{\kappa \hat{q}} \ab(1 + zg_S(z)) 
\end{equation}
\begin{equation}
  \bar{\chi} = \frac{1}{M_0}\E\ab[\tr\ab(\mathcal{M}^{-1})] = \frac{1}{\kappa \hat{q}} g_S(z) .
\end{equation}

Here, we used Lemma~\ref{lemma:trace_identity} to simplify Eq.~\eqref{eq:saddle_point_q_0_before} to Eq.~\eqref{eq:saddle_point_q_0_after}.
We further introduce the order parameters $q_0$ and $m_0$, which characterize the correlation between the weight matrix and the structure $A$ in the context of the IDG error, as follows:

\begin{align}
  q_0 &= \frac{1}{r}\E\ab[\tr\ab(AA^\top {W^*}^\top W^*)] \\
  &= \frac{1}{r} \E\ab[\tr\ab(\hat{\chi}AA^\top \mathcal{M}^{-1}T^\top T \mathcal{M}^{-1}+ \hat{\bar{\chi}}AA^\top \mathcal{M}^{-1}R^\top R \mathcal{M}^{-1}+ \hat{m}^2 AA^\top \mathcal{M}^{-1}S^2 \mathcal{M}^{-1})] \\
  &= \frac{1}{r\kappa} \ab[ \hat{\chi} \E\ab[\tr\ab(S \mathcal{M}^{-2})] +\hat{\bar{\chi}} \E\ab[\tr\ab(A^\top A\mathcal{M}^{-2})] + \hat{m}^2\kappa \E\ab[\tr\ab(S^2 \mathcal{M}^{-2})] ] \label{eq:saddle_point_q_0_before} \\
  &= \frac{1}{\rho \kappa\hat{q}^2} \ab[\hat{\chi} \ab(g_S(z) + z g'_S(z)) +  \rho \hat{\bar{\chi}} g'_{\tilde{S}}(z) + \kappa\hat{m}^2\ab(1+2zg_S(z) + z^2g'_S(z))] \label{eq:saddle_point_q_0_after}
\end{align}
\begin{equation}
  m_0 = \frac{1}{r}\E\ab[\tr\ab(A^\top {W^*} A)] = \frac{\hat{m}}{r}\E\ab[\tr\ab(S\mathcal{M}^{-1})] = \frac{\hat{m}}{\rho \hat{q}} \ab(1 + z g_S(z))
\end{equation}

\subsubsection{Conjugate Parameters}\label{sec:conjugate_parameters}
The saddle point equations which define the conjugate parameters are followed by $G_I$ (Eq.~\eqref{eq:def_G_I}) and $G_S$ (Eq.~\eqref{eq:def_G_S}) in Eq.~\eqref{eq:G_I_G_E_G_S}.
In the RS ansatz, there is permutation symmetry among the replica indices.
As a result, the saddle point equations can be written for any $a\neq b$ as follows:
\begin{align}
  \hat{\chi} &= \gamma \int \Odif{\vb{z}}\Odif{z} \, u^a u^b \exp\ab[-\frac{\beta}{2}\sum_c \ab(u^c)^2] \label{eq:saddle_point_chi_beforeRS} \\
  \hat{\bar{\chi}} &= \gamma \frac{1}{\alpha}\ab(1+\sigma^2) \int \Odif{\vb{z}}\Odif{z} \, u^a u^b \exp\ab[-\frac{\beta}{2}\sum_c \ab(u^c)^2] \\
  \hat{q} &= \gamma \int \Odif{\vb{z}}\Odif{z} \, \ab(1+\beta u^a u^b -\beta \ab(u^a)^2) \exp\ab[-\frac{\beta}{2}\sum_c \ab(u^c)^2] \\
  \hat{\bar{q}} &= \gamma \frac{1}{\alpha}\ab(1+\sigma^2) \int \Odif{\vb{z}}\Odif{z} \, \ab(1+\beta u^a u^b -\beta \ab(u^a)^2) \exp\ab[-\frac{\beta}{2}\sum_c \ab(u^c)^2] \\
  \hat{m} &= \gamma \int \Odif{\vb{z}}\Odif{z} \, \ab(1-u^a \sum_{b=1}^n u^b) \exp\ab[-\frac{\beta}{2}\sum_c \ab(u^c)^2]  \label{eq:saddle_point_m_beforeRS},
\end{align}
where $u^a$ is a Gaussian variable with covariance matrix $\mathcal{M}^{ab}$ (see Eq.~\eqref{eq:def_of_mathcal_M}). Under the RS ansatz, 
$u^a$ can be represented using $n+1$ independent standard normal random variables $z$ and $\vb{z} = \{z^a\}_{a=1, \cdots, n}$ as:
\begin{align}
  u^a = z^a \sqrt{\frac{\frac{1}{\alpha}\ab(1+\sigma^2)\bar{\chi} +  \chi }{\beta}}  + z  \sqrt{\frac{1}{\alpha}\ab(1+\sigma^2) \bar{q}+ q-2m+1+\sigma^2} + h_{\text{source}} ,
\end{align}
where we introduced $h_{\text{source}}$ as a source term used to generate moments of $u^a$, which is evaluated at $h_{\text{source}}=0$.
Here, by changing variables as $\tilde{z}^a = \sqrt{z^a/\beta}$ and taking the limit $\beta\to\infty$, 
the integral of $\tilde{z}^a$ with Gaussian weight concentrates on $\tilde{z}^a = \ab(\tilde{z}^a)^*$ as:
\begin{align}
  \int \Odif{\vb{z}} \, \ab(\prod_a f\ab(u^a)) \exp\ab[-\frac{\beta}{2}\sum_a \ab(u^a)^2] &= \frac{\int \Odif{\vb{z}} \, \ab(\prod_a f\ab(u^a)) \exp\ab[-\frac{\beta}{2}\sum_a \ab(u^a)^2]}{\int \Odif{\vb{z}} \,  \exp\ab[-\frac{\beta}{2}\sum_a \ab(u^a)^2]} \quad (n\to 0) \\
  &=  \frac{\prod_a \int \odif{\tilde{z}^a} \, f\ab(u^a) \exp\ab[-\frac{\beta}{2} \ab(\ab(u^a)^2 + \ab(\tilde{z}^a)^2) ]}{\prod_a \int \odif{\tilde{z}^a} \,  \exp\ab[-\frac{\beta}{2} \ab(\ab(u^a)^2 + \ab(\tilde{z}^a)^2) ]} \\
  &= \left. \prod_a f(u^a) \right|_{\tilde{z}^a = \ab(\tilde{z}^a)^*} \quad (\beta \to \infty) ,
\end{align}
where $f(u^a)$ is a function of $u^a$, and the value of $\ab(\tilde{z}^a)^*$ is the solution of the following optimization problem:
\begin{align}
  \ab(\tilde{z}^a)^* &= \argmin_{\tilde{z}^a} \ab[\frac{1}{2} (\tilde{z}^a)^2 + \frac{1}{2} \ab(u^a)^2 ] \\
  &= \argmin_{\tilde{z}^a} \ab[\frac{1}{2} (\tilde{z}^a)^2 + \frac{1}{2} \ab( \sqrt{\frac{1}{\alpha}\ab(1+\sigma^2)\bar{\chi} +  \chi }\, \tilde{z}^a + \sqrt{\frac{1}{\alpha}\ab(1+\sigma^2) \bar{q}+ q-2m+1+\sigma^2} \, {z} + h_{\text{source}})^2 ] \\
  &= - \frac{\sqrt{\frac{1}{\alpha}\ab(1+\sigma^2)\bar{\chi} +  \chi }\ab( \sqrt{\frac{1}{\alpha}\ab(1+\sigma^2) \bar{q}+ q-2m+1+\sigma^2} \, {z} + h_{\text{source}})}{1+\chi+\frac{1}{\alpha}\ab(1+\sigma^2)\bar{\chi}} .
\end{align}
We denote $u^a$ evaluated at $\tilde{z}^a = (\tilde{z}^a)^*$ as $\tilde{u}^*$, i.e., 
\begin{align}
  \tilde{u}^* &=  \ab(\tilde{z}^a)^* \sqrt{\frac{1}{\alpha}\ab(1+\sigma^2)\bar{\chi} +  \chi} + z  \sqrt{\frac{1}{\alpha}\ab(1+\sigma^2) \bar{q}+ q-2m+1+\sigma^2} + h_{\text{source}} \\
  &= \frac{\sqrt{\frac{1}{\alpha}\ab(1+\sigma^2)\bar{q}+ q - 2m+ 1 + \sigma^2} \, z + h_{\text{source}}}{1+\chi + \frac{1}{\alpha}\ab(1+\sigma^2) \hat{\bar{\chi}}} ,
\end{align}
which does not depend on the replica index. 
By substituting this property and taking the limit $n \to 0$, the saddle point equations Eqs.~(\ref{eq:saddle_point_chi_beforeRS})--(\ref{eq:saddle_point_m_beforeRS}) can be rewritten as follows:
\begin{align}
  \hat{\chi} &= \left. \gamma \int \Odif{z} \, \ab(\tilde{u}^*)^2 \right|_{h_\text{source}=0} = \gamma \frac{\frac{1}{\alpha}\ab(1+\sigma^2)\bar{q} + q - 2m + 1 + \sigma^2}{\ab(1 + \chi + \frac{1}{\alpha} \ab(1+\sigma^2)\bar{\chi})^2} \\
  \hat{\bar{\chi}} &= \left.\frac{1}{\alpha}\ab(1+\sigma^2) \gamma \int \Odif{z} \, \ab(\tilde{u}^*)^2 \right|_{h_\text{source}=0} = \gamma \frac{1}{\alpha}\ab(1+\sigma^2) \frac{\frac{1}{\alpha}\ab(1+\sigma^2)\bar{q} + q - 2m + 1 + \sigma^2}{\ab(1 + \chi + \frac{1}{\alpha} \ab(1+\sigma^2)\bar{\chi})^2} \\
  \hat{q} &= \left.\gamma \int \Odif{z} \odv{\tilde{u}^*}{h_{\text{source}}} \right|_{h_\text{source}=0} = \gamma \frac{1}{1+\chi + \frac{1}{\alpha}\ab(1+\sigma^2)\bar{\chi}} \\
  \hat{\bar{q}} &=\left. \frac{1}{\alpha}\ab(1+\sigma^2) \gamma \int \Odif{z} \odv{\tilde{u}^*}{h_{\text{source}}} \right|_{h_\text{source}=0} = \frac{1}{\alpha}\ab(1+\sigma^2) \gamma \frac{1}{1+\chi + \frac{1}{\alpha}\ab(1+\sigma^2)\bar{\chi}} \\
  \hat{m} &= \left. \gamma \int \Odif{z} \odv{\tilde{u}^*}{h_{\text{source}}} \right|_{h_\text{source}=0}  = \gamma \frac{1}{1+\chi + \frac{1}{\alpha}\ab(1+\sigma^2)\bar{\chi}} .
\end{align}

These results in Section~\ref{sec:order_parameters} and Section~\ref{sec:conjugate_parameters} are summarized in Result~\ref{result:parameter_derivation}.

\subsection{Generalization Error (derivation of the Result~\ref{result:complete_statement_E})}
In this subsection, we compute the MSE in each protocol from Result.~\ref{result:complete_statement_W} using the order parameters defined by Result.~\ref{result:parameter_derivation}.
For the generalization error, we need to compute the correlation between the prediction using the optimal weight matrix $W^*$ and the ground-truth label $y$.

\subsubsection{Task Memorization}

The effective feature matrix of the prompt in the task memorization setting is given by:
\begin{equation}
  H = \frac{1}{M_0} \frac{D}{\tilde{L}} \sum_{\mu = 1}^{M_0} \ab[ \vb{x}_{L+1} {\vb{w}^\mu}^\top  \sum_{l=1}^{\tilde{L}} \vb{x}_l \vb{x}_l^\top],
\end{equation}
where $\vb{w}^* = 1/ M_0 \sum_{\mu=1}^{M_0} \vb{w}^\mu$ is the averaged learned task in the pre-training.
We have the following relation:
\begin{align}
  \mathcal{E}_{\text{TM}} &= \E \ab[y - \tr(W^* H^\top)]^2 \\
  &= 1-2\sum_{ij} \E\ab[y H_{ij} W^*_{ij}] + \sum_{ijkl} \E\ab[H_{ij} H_{kl} W^*_{ij} W^*_{kl}] \\
  &= 1 - 2\frac{1}{D}\frac{1}{M_0}\sum_\mu \tr\ab({\vb{w}^\mu}^\top W^* \vb{w}^\mu) + \frac{1}{\tilde{\alpha}} \frac{1}{D}\tr\ab({W^*}^\top W^*) + \frac{1}{D} \frac{1}{M_0}\sum_\mu \ab(W^* \vb{w}^\mu)^\top \ab(W^* \vb{w}^\mu) \\
  &= 1 - 2m + q + \frac{1}{\tilde{\alpha}} \bar{q} .
\end{align}

\subsubsection{In-Distribution Generalization}

The effective feature matrix of the prompt in the in-distribution setting is given by:
\begin{equation}
  H = \frac{D}{\tilde{L}} \ab[ \vb{x}_{L+1} \ab(A \vb{v}^*)^\top  \sum_{l=1}^{\tilde{L}} \vb{x}_l \vb{x}_l^\top],
\end{equation}
where $\vb{v}^* \sim \mathcal{N}(0, I_r) \in \R^r$.
We have the following relation:
\begin{align}
  \mathcal{E}_{\text{IDG}} &= \E \ab[y - \tr(W^* H^\top)]^2 \\
  &= 1-2\sum_{ij} \E\ab[y H_{ij}]\E\ab[W^*_{ij}] + \sum_{ijkl} \E\ab[H_{ij} H_{kl} W^*_{ij} W^*_{kl}] \\
  &= 1 - \frac{2}{r} \sum_{ijkl} \E\ab[A_{ik}A_{jl}  W^*_{ij}] \E\ab[v_kv_l] + \frac{1}{D} \sum_{ijl} \E\ab[\ab(\frac{D}{\tilde{L}}\delta_{ij}+ \sum_{st}A_{js}A_{lt}v_sv_t) W^*_{ij} W^*_{il}] \\
  &= 1 - 2\frac{1}{r}\E \ab[\tr \ab(A^\top W^* A)] + \frac{1}{r}\E \ab[\tr \ab(\ab(W^* A)^\top W^* A)] + \frac{1}{\tilde{\alpha}}\frac{1}{D}\tr\ab({W^*}^\top W^*) \\
  &= 1 - 2{m}_0 + q_0 +  \frac{1}{\tilde{\alpha}} \bar{q} .
\end{align}

\subsubsection{Out-of-Distribution Generalization}
The effective feature matrix of the prompt in the out-of-distribution setting is given by:
\begin{equation}
  H = \frac{D}{\tilde{L}} \ab[ \vb{x}_{L+1} {\vb{w}^*}^\top  \sum_{l=1}^{\tilde{L}} \vb{x}_l \vb{x}_l^\top],
\end{equation}
where $\vb{w}^* \sim \mathcal{N}(0, I_D) \in \R^D$.
We have the following relation:
\begin{align}
  \mathcal{E}_{\text{ODG}} &= \E \ab[y - \tr(W^* H^\top)]^2 \\
  &= 1-2\sum_{ij} \E\ab[y H_{ij}]\E\ab[W^*_{ij}] + \sum_{ijkl} \E\ab[H_{ij} H_{kl}]\ab[W^*_{ij} W^*_{kl}] \\
  &= 1 - \frac{2}{D} \sum_{ij} \E\ab[w_i w_j] \E\ab[W^*_{ij}] + \frac{1}{D} \sum_{ijl} \E\ab[\frac{D}{\tilde{L}}\delta_{ij}+ w_jw_l] \ab[W^*_{ij} W^*_{il}] \\
  &= 1 - 2\frac{1}{D}\tr W^* + \ab(1+ \frac{1}{\tilde{\alpha}}) \frac{1}{D}\tr\ab({W^*}^\top W^*) \\
  &= 1 - 2\bar{m} + \ab(1+ \frac{1}{\tilde{\alpha}}) \bar{q} .
\end{align}

These results are summarized in Result~\ref{result:complete_statement_E}.

\subsection{Asymptotics under $\alpha\gg1$ (derivation of Proposition~\ref{prop:asympt})}

The proof proceeds by performing an asymptotic expansion of the saddle-point equations, presented in Result~\ref{result:parameter_derivation}, in the limit where $\alpha \to \infty$.

The key insight is that in this limit, the system of equations simplifies considerably. First, we observe that the conjugate parameters $\hat{\bar{q}}$ and $\hat{\bar{\chi}}$ vanish as $\mathcal{O}(1/\alpha)$ or faster. This causes the parameter $z = -(\lambda + \hat{\bar{q}})/\hat{q}$ to converge to a finite constant that is dependent of $\alpha$. 

By substituting these simplified forms back into the equations for the primary order parameters ($q, m, \bar{q}$, etc.) and retaining only the leading-order terms in $\alpha$, the coupled system of equations becomes analytically solvable. This systematic expansion directly yields the explicit expressions for each parameter as stated in the proposition.

\section{Proof of Theorem~\ref{theorem:eigenvalue_distribution_of_S}}
\label{app:eigen_proof}

In this section, we proof Theorem~\ref{theorem:eigenvalue_distribution_of_S}.

From Lemma~\ref{lemma:resolvent_S}, the resolvent of $S$ is given by:
\begin{equation}
    g_S(z) \;=\; - (1-\rho)\frac{1}{z} + \rho \frac{-(\kappa \rho z + \rho - \kappa) - \sqrt{\left(\kappa \rho z + \rho - \kappa \right)^2 - 4\rho^2 \kappa z}}{2\rho z}.
\end{equation}
By definition, the limiting spectral density $\nu_S(\lambda)$ is obtained from the imaginary part of the resolvent:
\begin{equation}
\nu_S(\lambda) = - \frac{1}{\pi} \lim_{\epsilon \to 0^+} \text{Im}\, g_S(\lambda + i\epsilon).
\end{equation}

For $\lambda \in [\lambda_-, \lambda_+]$, the square root term in $g_{\tilde{S}}(z)$ contributes an imaginary part. A direct computation yields:
\begin{equation}
\lim_{\epsilon\to 0^{+}} \text{Im}\, g_{\tilde{S}}(\lambda + i\epsilon) \;=\; \frac{1}{2\rho \lambda} \sqrt{4\rho^2 \kappa \lambda - \left(\kappa \rho \lambda + \rho - \kappa \right)^2}.
\end{equation}
Substituting this into $g_S(z)$ gives:
\begin{equation}
\nu_S(\lambda) \;=\; \frac{\rho\kappa}{2\pi \lambda} \sqrt{(\lambda_+ - \lambda)(\lambda - \lambda_-)}, \quad \lambda \in [\lambda_-,\lambda_+],
\end{equation}
with the edges
\[
\lambda_\pm = \frac{1}{\rho\kappa} \left(\sqrt{\rho} \pm \sqrt{\kappa}\right)^2.
\]

Finally, since $S$ has rank at most $\min(r,M_0) = D\min(\rho,\kappa)$, there is a mass $1-\min(\rho,\kappa)$ at zero. 
Hence, we have:
\[
\nu_S(\lambda) \;=\; \left(1-\min(\rho,\kappa)\right)\delta(\lambda) \;+\; \mathbf{1}_{[\lambda_-,\lambda_+]}(\lambda) \, \frac{\rho\kappa}{2\pi \lambda} \sqrt{(\lambda_+ - \lambda)(\lambda - \lambda_-)}.
\]

\section{Consistency of the Replica Method with Numerical Experiments}
\label{app:agreement}
In this section, we validate our analytical results from the replica method against numerical experiments. As depicted in Figure~\ref{fig:check_order_parameters} and Figure~\ref{fig:check_generalization_error}, 
the theoretical predictions for both the order parameters and the generalization error show excellent agreement with the simulation results, confirming the accuracy of our results.

\begin{figure}[htb]
  \centering
  \includegraphics[width=1.0\textwidth]{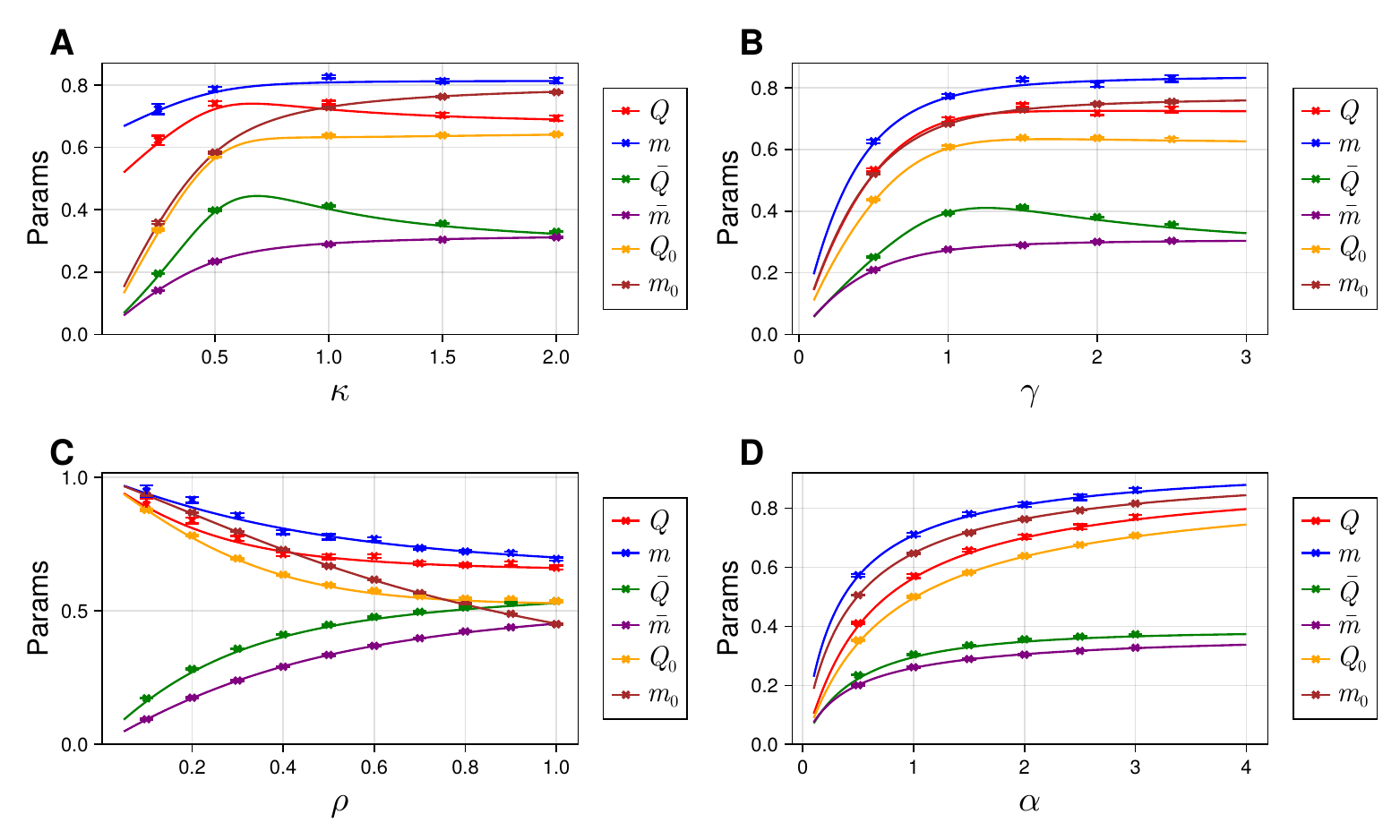}
  \caption{Comparison of the order parameters $Q, m, \bar{Q}, \bar{m}, Q_0, m_0$ obtained from the replica method (lines) with those obtained from the numerical experiments (error bars). 
  Parameters for (A-D): $\lambda=0.1, \sigma=0.1, D=70$; 
  (A)$\gamma=1.5, \rho=0.4, \alpha=2.0$; 
  (B)$\kappa=1.0, \rho=0.4, \alpha=2.0$; 
  (C)$\kappa=1.0, \gamma=1.5, \alpha=2.0$; 
  (D)$\kappa=1.0, \gamma=1.5, \rho=0.4$. Error bars represent the standard error of the mean over $10$ trials per point.}
  \label{fig:check_order_parameters}
\end{figure}

\begin{figure}[htb]
  \centering
  \includegraphics[width=1.0\textwidth]{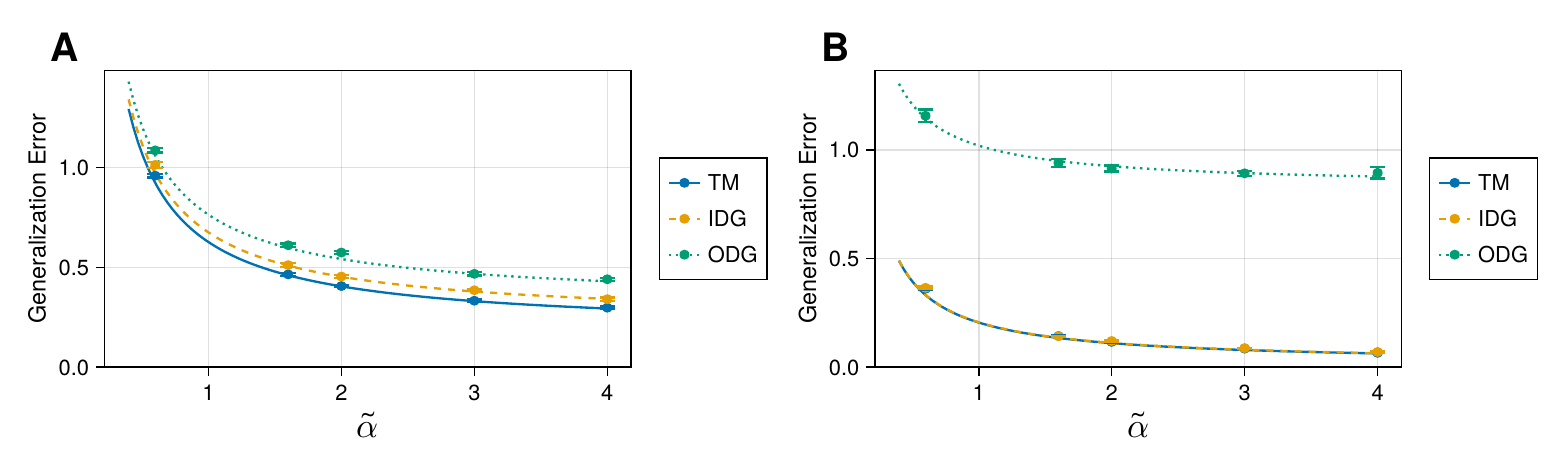}
  \caption{Comparison of the generalization errors ($\mathcal{E}_{\text{TM}}, \mathcal{E}_{\text{IDG}}, \mathcal{E}_{\text{ODG}}$) obtained from the replica method (lines) with those obtained from the numerical experiments (error bars). 
  Parameters for (A-D): $\kappa=4.0, \gamma = 0.5, \alpha=4.0, \lambda=0.1, \sigma=0.1, D=60$; 
  (A)$\rho=0.9$; 
  (B)$\rho=0.2$. Error bars represent the standard errors of the mean over $5$ trials per point.}
  \label{fig:check_generalization_error}
\end{figure}

\section{Experimental Details for the Softmax-Attention Experiments}
\label{app:experiments}

This section reports the reproducibility details of numerical experiments using a one-layer softmax self-attention model in Section~\ref{sec:experiments}. The task generation, prompt construction (including masking the query label), and the three evaluation protocols (TM/IDG/ODG) follow Section~\ref{sec:model}. We only describe the parts that differ from the linear-attention analysis.

\subsection{Architecture}
We replace the linear attention map in Eq.~\eqref{eq:decomp} with a standard scaled dot-product softmax attention layer with learned linear projections. Each token has dimension $D+1$, obtained by concatenating the input vector $\vb{x}\in\mathbb{R}^D$ and the scalar label channel. For a context matrix $C\in\mathbb{R}^{(D+1)\times (L+1)}$, we define
$Q = W_Q C,\quad K = W_K C,\quad V = W_V C$,
where $W_Q\in\mathbb{R}^{d_{\text{attn}}\times (D+1)}, W_K\in\mathbb{R}^{d_{\text{attn}}\times (D+1)}$, and $W_V\in\mathbb{R}^{(D+1)\times (D+1)}$. We use no biases in all projections. Unlike standard transformer blocks, we do not include an output projection $W_O$; instead, we set the value dimension to $D+1$ so that the value projection already returns the token dimension.

We use only the query vector of the last token. Writing 
\begin{align}
    q_{\text{last}}\in\mathbb{R}^{d_{\text{attn}}}
\end{align} 
for the last-column query and $K\in\mathbb{R}^{d_{\text{attn}}\times (L+1)}$ for all keys, the attention weights are
\begin{align}
    \vb{a} = \mathrm{softmax}\!\left(\frac{q_{\text{last}}^\top K}{\sqrt{d_{\text{attn}}}}\right)\in\mathbb{R}^{L+1}.
\end{align}
The scalar prediction is read out only from the label channel (the ($D+1$)-st coordinate) of the values:
$\hat{y} \;=\; \sum_{l=1}^{L+1} a_l\, V_{D+1,l}$.
This matches the evaluation convention in Section~\ref{sec:model}, where the query label is masked in the input and the model predicts the missing scalar.
Unless stated otherwise, we fix the attention width to $d_{\text{attn}}=4$ in all reported experiments.

\subsection{Training objective and optimization}
We train the model parameters $\Theta=\{W_Q,W_K,W_V\}$ by minimizing the empirical mean-squared error on the query label:
\begin{align}
    \mathcal{L}(\Theta)=\frac{1}{M}\sum_{\mu=1}^{M}\bigl(\hat{y}^\mu - y^\mu_{L+1}\bigr)^2,
\end{align}
using Adam with learning rate $\eta = 10^{-3}$. 
We use full-batch training: at each epoch, the loss $\mathcal{L}(\Theta)$ is evaluated over all $M$ training instances and a single Adam update is performed. 
All projection matrices are initialized using Glorot (Xavier) uniform initialization.

\vfill

\vfill

\end{document}